\documentclass[aps,prd,onecolumn,nofootinbib, 11pt,eqsecnum]{revtex4-2}

\usepackage{amsmath, amssymb}
\usepackage{graphicx}

\usepackage{lineno}
\usepackage{amsthm}
    \theoremstyle{definition}
    \newtheorem{dfn}{Definition}[section]
    \newtheorem{prop}[dfn]{Proposition}

    \newtheorem{rem*}[dfn]{Remark}

\usepackage[T1]{fontenc}
\usepackage{stix}
\def\dd{\mathrm{d}}
\def\ee{\mathrm{e}}

\usepackage{mathtools}

\usepackage{tikz}

\usepackage{ulem}
\newcommand*\circled[1]{\textcircled{\footnotesize#1}}

\usepackage{hyperref}

\newcommand{\Res}[2]{\underset{#1\, =\, #2}{\text{Res}}}

\renewcommand{\emph}[1]{{\it #1}}

\begin{document}

\title{Path to an exact WKB analysis of black hole quasinormal modes}



\author{Taiga Miyachi}
\affiliation{Department of Physics, Kobe University, Kobe 657-8501, Japan}
\thanks{email: \href{tmiyachi@omu.ac.jp}{tmiyachi@omu.ac.jp}}

\author{Ryo Namba}
\affiliation{RIKEN Interdisciplinary Theoretical and Mathematical Sciences (iTHEMS), Wako, Saitama 351-0198, Japan}
\thanks{email: \href{ryo.namba@riken.jp}{ryo.namba@riken.jp}}

\author{Hidetoshi Omiya}
\affiliation{Department of Physics$,$ Kyoto University$,$ Kyoto 606-8502$,$ Japan}
\thanks{email: \href{omiya@tap.scphys.kyoto-u.ac.jp}{omiya@tap.scphys.kyoto-u.ac.jp}}

\author{Naritaka~Oshita}
\affiliation{Center for Gravitational Physics and Quantum Information,
Yukawa Institute for Theoretical Physics, Kyoto University, 606-8502, Kyoto, Japan}
\affiliation{The Hakubi Center for Advanced Research, Kyoto University,
Yoshida Ushinomiyacho, Sakyo-ku, Kyoto 606-8501, Japan}
\affiliation{RIKEN Center for Interdisciplinary Theoretical and Mathematical Sciences (iTHEMS), Wako, Saitama 351-0198, Japan}
\thanks{email: \href{naritaka.oshita@yukawa.kyoto-u.ac.jp}{naritaka.oshita@yukawa.kyoto-u.ac.jp}}


\begin{abstract}
We investigate black hole quasinormal modes using the exact WKB method.
We perform an analytic continuation from the horizon to infinity along the positive real axis of the radial coordinate and impose appropriate boundary conditions at these asymptotic positions.
We clarify the role of previously overlooked logarithmic spirals of Stokes curves and branch cuts emerging from the horizon.
We carefully reformulate the derivation of the quasinormal mode conditions using the exact WKB analysis, incorporating the contributions from these features into the calculation.
We successfully derive correct results for both solvable model examples and the Schwarzschild spacetime. Our formulation enjoys straightforward extensions to other background geometries as well as a wide range of other physical systems.
\end{abstract}

\nolinenumbers 

\unitlength = 1mm
\begin{flushright}
KOBE-COSMO-25-03, YITP-25-20
\end{flushright}

\maketitle

\newpage
\section{Introduction}
\label{sec:Intro}

Black holes are one of the best suites for testing gravity in a strong field regime, as astrophysical black holes can be characterized solely by two parameters, i.e., their mass and angular momentum, in general relativity. Such uniqueness of black holes can be seen in the signal sourced by a ringing black hole, the so-called ringdown phase. A ringdown waveform can be decomposed into quasinormal modes (QNMs) and power-law tails. The latter may be less dominant in most cases. The QNM frequencies of the Schwarzschild or Kerr black hole take complex values, reflecting the dissipative nature of black holes. The detailed nature of QNM spectra has been actively investigated based on the black hole perturbation theory.
We should also emphasize that the QNM spectrum has been discussed in various contexts in theoretical physics, such as astrophysics, quantum gravity, and its holographic interpretation (see Refs. \cite{Berti:2009kk,Konoplya:2011qq} and references therein).

Mathematically, QNMs are the complex eigenvalues of a second-order ordinary differential equation with two-point boundary conditions.
Some numerical techniques, e.g.~Leaver's method \cite{Leaver:1985ax,Leaver:1986vnb,Leaver:1986gd}, have been developed to solve the problem. However, it is still a hard problem to understand the global structure of QNM spectra over the complex frequency plane analytically. The main difficulty is to find the QNM eigenfunctions in a systematic manner.
One approach involves constructing approximate solutions at both boundaries and matching them through analytic continuation.
In this process, QNMs are encoded in the coefficients of the solutions in the matching region, analogous to the Bohr-Sommerfeld quantization rule in quantum mechanics. This enables us to find the global pattern of the QNM distribution, especially at the eikonal limit, $\ell \to \infty$, and for the highly damping modes, $\text{Im} \, (\omega) \to - \infty$.
The (Jeffereys-)Wentzel–Kramers–Brillouin (WKB) solutions are widely used to construct such approximations and perform analytic continuation \cite{Zouros:1979iw,Detweiler:1980uk,Schutz:1985km,Ferrari:1984zz,Mashhoon:1985cya,Decanini:2009mu,Andersson:2003fh,Iyer:1986np,Konoplya:2003ii,Matyjasek:2017psv,Konoplya:2019hlu}\footnote{Another analytic methods using gauge theoretical knowledge have been developed in \cite{Aminov:2020yma,Bonelli:2021uvf,Novaes:2014lha}.}.

Over the past few decades, the analysis based on a full-order resummation of the WKB infinite series, known as the {\it exact WKB analysis}, has been extensively developed \cite{Voros1983TheRO, 10.1007/978-4-431-68170-0_1, AIF_1993__43_1_163_0,2024arXiv241017224N,NIKOLAEV_2022,Nikolaev_2023}.
See also Refs. \cite{kawai2005algebraic,2014arXiv1401.7094I} for reviews.
The standard WKB solutions are often comprised of divergent series in physically interesting systems. As a result, while they can still serve as asymptotic expansions of the true solutions, they are unable to capture the global structure of a given differential equation. After the Borel sum of the series, the divergence can be regulated provided that the Borel summability is secured. Since this resummed function is a solution to the original differential equation by construction, it indeed provides the solution that has desired asymptotic expansions thanks to Watson's lemma \cite{watson1912vii,sokal1980,Aoki2019}.
The analysis on the property of the Borel-summed WKB series, therefore, enables us to study the global behavior of the corresponding solutions without losing advantages of the WKB method and without resorting to explicit functional forms of solutions.

One notable behavior of solutions that the standard WKB approximation cannot but the exact WKB can retrieve is so called \emph{Stokes phenomena}, in which asymptotic expansions change discontinuously while their appropriate linear combinations are continuous (and so is the full solution).
The exact WKB analysis handles Stokes phenomena on the complex coordinate plane, which occur across \emph{Stokes curves}, enabling analytical continuation of WKB solutions from one boundary to another {\it without approximations}.
It is then straightforward to impose boundary conditions at two asymptotics (incoming at the event horizon and outgoing at the spatial infinity in the case of black hole perturbations). 
The resulting conditions for QNMs can be precisely expressed by phase integrals of the odd-order WKB series, reducing the problem to computation of these phase integrals.

Applications of the exact WKB analysis to QNMs are rather limited in the literature. For instance, \cite{Imaizumi:2022dgj, Imaizumi:2022qbi} calculate QNMs of D-branes and M-branes using the exact WKB analysis.
However, to the best of our knowledge, its application to black hole QNMs has not yet been addressed, while the applicability of the exact WKB analysis in this context is only suggested in \cite{Hatsuda:2019eoj}.
This circumstance in the community has motivated our investigation in this paper.

The monodromy method \cite{Motl:2003cd, Andersson:2003fh} closely resembles the exact WKB analysis, as both utilize Stokes curves for analytic continuation. However, from the perspective of the exact WKB analysis, two key subtleties arise.
First, the monodromy method is strictly formulated for highly damped QNMs, and approximations are applied before analytic continuation. In contrast, the exact WKB analysis allows for exact analytic continuation, with approximations appearing only in the final stage of calculation of the phase integrals. Consequently, all difficulties are ultimately reduced to the evaluation of the aforementioned phase integrals.
Second, in the monodromy method, the radial or tortoise coordinate is analytically continued to a complex plane. Then, one has to choose a proper closed contour that leads to the correct QNM frequencies, which may not be obvious a priori.
In contrast, the exact WKB method allows us to construct a QNM eigenfunction on the real axis, which is advantageous when differential equations are originally defined on the real axis. One can systematically perform analytic continuation of a solution from one Stokes region to another along the contour that runs on the real axis. 
The path in the exact WKB analysis connects two asymptotic points where boundary conditions are imposed and is no longer a closed contour.

In this paper, we apply the exact WKB analysis to the calculation of black hole QNM frequencies.
We perform analytic continuation from the horizon to the spatial infinity along the positive real axis of the radial coordinate.
We reveal logarithmic spiral flows of Stokes curves that have been overlooked so far and branch cuts emerging from the horizon.
We pay special attention to incorporating these features into the calculation and reformulating the analytical derivation of QNM conditions using the exact WKB analysis.
With our methodology developed in this paper, distinct from the monodromy technique, we successfully derive the correct results in the case of solvable models and in the Regge-Wheeler equation, i.e.~a perturbed Schwarzschild black hole.

This paper is organized as follows.
In section \ref{sec:review_WKB}, we present a brief review of the exact WKB analysis.
In section \ref{sec:Eample}, we provide two known solvable models. 
First, we consider the harmonic oscillator as the simplest example to demonstrate how to analytically continue the exact WKB solutions. 
We then consider the Morse potential, a model simpler than but still similar to the Schwarzschild spacetime in terms of the Stokes geometry.
An important feature of the Stokes curves is their logarithmic spiral flow.
We perform analytic continuation of the WKB solutions by properly dealing with contributions due to the presence of such a flow.
In section \ref{sec:Schwarzschild}, we demonstrate our calculation in the case of Schwarzschild spacetime and recover the asymptotic values of highly damped QNM frequencies.
Section \ref{sec:conclusion} is devoted to the conclusion and discussion.
We complement the calculation of the residue of an infinite WKB series at regular singular points of differential equation in appendix \ref{app:regular_residue} and that of approximate results of phase integrals in appendix \ref{app:phase_int}.

\section{Review of exact WKB analysis}
\label{sec:review_WKB}
In this section, we briefly review the construction and usage of the exact WKB analysis and provide results without proof.
Interested readers can refer to~\cite{kawai2005algebraic,2014arXiv1401.7094I} for further mathematical details.

The exact WKB analysis is a powerful technique to analyze the global structure of solutions to linear second-order ordinary differential equations. It is thus well suited for finding QNMs, which is essentially an eigenvalue problem with given boundary conditions.
The formulation of the exact WKB is based on the resummation of an infinite series of a WKB solution using the so-called Borel sum. While a WKB series is often divergent, especially in physically interesting regimes, its Borel resummed counterpart is not and can give mathematically well-defined solutions, provided the Borel summability is ensured.

In the following subsections, we first construct WKB series solutions and their Borel summation. We then introduce the concept of the so-called Stokes curves, across which the behavior of asymptotic expansions of the solutions changes discontinuously. The effect of this change, called a Stokes phenomenon, is depicted by connection formulae, which we summarize toward the end of this section.

\subsection{Construction of WKB series solutions}

Our goal of this work is to compute the QNMs of perturbations around a spherically symmetric background spacetime. The perturbation equations of motion are thus second-order ordinary differential equations that are linear in the perturbed variables. 
The most general form of such an equation, without source terms, can be written as
\begin{align}
    \left[ \frac{\dd^2}{\dd x^2} + A(x) \, \frac{\dd}{\dd x} + B(x) \right] \varphi(x) = 0 \; ,
    \label{eq:general_secondorder_eq}
\end{align}
where $x$ and $\varphi(x)$ are the (dimensionless) radial coordinate and the perturbed variable, respectively, and $A(x)$ and $B(x)$ are functions of $x$ that are fixed once the background is specified. By changing variables by $\psi \equiv \exp \left[ \frac{1}{2} \int^x A(x') \, \dd x' \right] \varphi$, one can always recast \eqref{eq:general_secondorder_eq} into an equation of the Schr\"{o}dinger type. 
In order to construct the WKB series as a formal solution, we introduce a ``large'' parameter $\eta$
and rewrite the obtained equation as
\begin{align}
    \left[ -\frac{\dd^2}{\dd x^2}+\eta^2 Q(x, \eta) \right] \psi(x, \eta)=0\,,
    \label{eq:Schrodinger_type_eq}
\end{align}
which recovers eq.~(\ref{eq:general_secondorder_eq}) by taking $\eta = 1$ and changing variables from $\psi$ back to $\varphi$.
Here, $Q(x,\eta)$ is related to $A(x)$ and $B(x)$ by $Q(x,1) = -B + \frac{A^2}{4} + \frac{\partial_x A}{2}$. In order to apply the Borel sum, described below, $Q(x,\eta)$ should be a meromorphic function of $x$. Although $\eta$ is a parameter and not a variable so far, the Borel transformation we perform below treats $\eta$ as if it were a conjugate variable, and hence we write $\psi(x,\eta)$ as a function of both $x$ and $\eta$.

Let us note that the parameter $\eta$ is identified with $1/\hbar$ in the method of WKB approximation in quantum mechanics. However, in broader contexts where Schr\"{o}dinger-like equations are to be solved, as in our current interest of black hole perturbations, $\eta$ does not necessarily correspond to $1/\hbar$. Rather, $\eta$ is a formal parameter that is used to construct a formal power series as a solution to such a Schr\"{o}dinger-like equation. In this sense, $\eta$ is an auxiliary parameter to utilize the WKB analysis and should not be considered physically ``large.'' Notably, when working with dimensionless units in a given system, the original equation may lack suitable small or large parameters. In such a case, $\eta$ needs to be introduced manually into the potential $Q$ in a manner depending on one's physical interest. Once the full solution is obtained after the Borel summation of the formal WKB series, the solution is then applicable to any value of $\eta$, and in particular, taking $\eta = 1$ is a consistent choice, which recovers the solution to the original system before introducing $\eta$.

In transforming eq.~\eqref{eq:general_secondorder_eq} to eq.~\eqref{eq:Schrodinger_type_eq}, we may manually expand the function $Q(x,\eta)$ in terms of a power series in $\eta$, i.e.,
\begin{align}
    Q(x,\eta)
    = \sum_{j=0}^\infty \eta^{-j} Q_j(x)
    = Q_0(x)+\eta^{-1}Q_1(x)+\eta^{-2}Q_2(x) +\dots \; .
\end{align}
where $Q_{j} \, (j=0,1,\ldots)$ are meromorphic functions.
Notice that the way $\eta$ is implemented in $Q(x,\eta)$ is by no means unique. The choice needs to be made depending on each individual problem and on the quantities one is interested in. The strategy we take for choosing $\eta$ in the subsequent sections is such that asymptotic solutions around singular points of a given differential equation coincide with those of its leading-order WKB solutions. Asymptotic behaviors of WKB approximations may in general differ from those of the corresponding differential equation, and particularly equations with regular singular points need extra caution, see sections \ref{subsec:Morse_potential} and \ref{sec:Schwarzschild}. Since we impose boundary conditions at singular points of a differential equation, e.g.~the event horizon and the spatial infinity in the black hole spacetime, our choice of $\eta$ enables us to treat asymptotic behaviors in a consistent manner.

To construct a formal WKB series solution to eq.~\eqref{eq:Schrodinger_type_eq}, we take an ansatz for the form of $\psi$ as
\begin{align}\label{eq:WKBformal}
    \psi(x,\eta)= {\rm e}^{\int^x S(x',\eta) \, \dd x'}~,
\end{align}
where the function $S$ is the solution to a Riccati type equation, 
\begin{align}
    S^2+\frac{\dd S}{\dd x} = \eta^2 Q \; ,
    \label{eq:Riccati}
\end{align}
which is derived by feeding eq.~\eqref{eq:WKBformal} into eq.~\eqref{eq:Schrodinger_type_eq}.
We then expand $S$ as a power series in terms of $\eta$, that is,
\begin{align}
\label{eq:S_series}
    S(x,\eta)
    = \sum_{j=0}^\infty \eta^{1-j} S_{j-1}(x)
    = \eta S_{-1}(x)+\eta^0 S_{0}(x)+\eta^{-1} S_{1}(x)+\eta^{-2} S_{2}(x)+\dots \; .
\end{align}
Plugging this into \eqref{eq:Riccati} and equating the left-hand and right-hand sides order by order in $\eta$, 
we obtain the recursion relation,
\begin{subequations}
\label{eq:recursion}
\begin{align}
    &S_{-1}^2=Q_0 \; , \qquad
    S_0 = - \frac{1}{2 S_{-1}} \left( \frac{\dd S_{-1}}{\dd x} - Q_1 \right) \; , \\
    & S_{j+1} = - \frac{1}{2 S_{-1}} \left( \frac{\dd S_j}{\dd x} + \; \sum_{\mathclap{\substack{n+m = j \\ n \ge 0 , m \ge 0}}} S_n S_m - Q_{j+2} \right) \; , \quad 
    \left( j= 0,1,2, \dots \right) \; .
    \label{eq:S_j}
\end{align}
\end{subequations}
All the terms in the power series can be fixed recursively, and moreover, each $S_{j}$ is obtained solely in terms of lower orders and known functions $Q_j$. Notably, one can find all the $S_j$ algebraically and no longer needs to solve the differential equation \eqref{eq:Riccati} to find the solution. This series expansion in $\eta$ is thus an example of singular perturbations. In this respect, it is not surprising that the na\"{i}ve limit $\eta \to \infty$ may not exhibit a proper behavior of the original differential equation, and we need further treatment to obtain meaningful interpretations of the series solutions.

To continue constructing the WKB series, we observe from eq.~\eqref{eq:recursion} that the leading term has two roots: $S_{-1}^{\pm}=\pm\sqrt{Q_0}$. Once the sign of $S^\pm_{-1}$ is chosen, all the subleading terms in $S$ are computed uniquely through the recursion relation \eqref{eq:recursion}. Denoting the respective series by $S^\pm$, we obtain two independent solutions, i.e.,
\begin{align}
\label{eq:def_Sjpm}
    S^{\pm}=\sum_{j=-1}^{\infty}\eta^{-j}S^{\pm}_j\,,
\end{align}
where the superscript $\pm$ corresponds to the branch of $S_{-1}^{\pm}$. 
Hereafter, we assume that only even order terms in $Q$ may take nonzero values and all the odd terms are absent, that is $Q_{2j-1} = 0$ for any positive integer $j$.%
\footnote{As a matter of fact, this assumption is not strictly necessary. By finding $S^\pm$ with respective $S_{-1} = \pm \sqrt{Q_0}$ and the recursion relation \eqref{eq:recursion}, one can again define $S_{\rm odd} = (S^+ - S^-)/2$ and $S_{\rm even} = (S^+ + S^-)/2$. Then the relation \eqref{eq:Soddeven_relation} and the formal solutions \eqref{eq:WKB_solutions} still follow. However, in the presence of nonzero $Q_{2j-1}$, relations such as $S_j^+ = (-1)^j S_j^-$ would no longer hold. To reduce the complexity, we here make an assumption $Q_{2j-1} =0$, which is compatible with our following analysis.}
Under this condition, it is evident from the recursion relation \eqref{eq:recursion} that the sign of $S_{-1}^\pm$ is inherited to every odd-order term of $S_j$, while all the even-order terms are independent of the choice. This fact can be concisely expressed as the relation $S_j^{-} = (-1)^j S_j^{+}$ for $j \geq -1$.
The solutions $S^{\pm}$ can then be written in a compact form~%
\footnote{To justify changing the order of summation in eq.~\eqref{eq:S_series} to separate the odd- and even-order terms as in eq.~\eqref{eq:Spmoddeven}, absolute convergence of the original series is in principle required, if one cares about actual values of the series. However, we are merely constructing the formal WKB series here, and this re-ordering is nothing but part of the formal mathematical procedure.}
\begin{align}\label{eq:Spmoddeven}
    S^{\pm}=\pm S_{\text{odd}}+S_{\text{even}}~,
\end{align}
where we define
\begin{align}
    &S_{\text{odd}}:=\frac{1}{2}(S^+ - S^-)=\sum_{j=0}^{\infty} \eta^{1-2j}S_{2j-1}~,\\
    &S_{\text{even}}:=\frac{1}{2}(S^+ + S^-)=\sum_{j=0}^{\infty} \eta^{-2j}S_{2j}~.
\end{align}
In this expression, we denote $S_j \equiv S_j^+ = (-1)^j S_j^-$.
Plugging eq.~\eqref{eq:Spmoddeven} into the Riccati equation \eqref{eq:Riccati} and formally separating the odd and even orders, the equality of the odd-order part implies that $S_{\text{even}}$ and  $S_{\text{odd}}$ are related through
\begin{align}
\label{eq:Soddeven_relation}
    S_{\text{even}}=-\frac{1}{2}\frac{\dd}{\dd x} \log S_{\text{odd}}~.
\end{align}
Contributions from $Q$ do not appear in this equation, since $Q_{2j-1} = 0$ for all positive integer $j$ is assumed.
Substituting eq.~\eqref{eq:Spmoddeven} into eq.~\eqref{eq:WKBformal} together with the relation \eqref{eq:Soddeven_relation}, the formal WKB solutions are given by
\begin{align}
    \psi^{\text{WKB}}_{\pm}(x,\eta)=\frac{1}{\sqrt{S_{\text{odd}}}}\exp\bigg(\pm\int^x_{x_i} S_{\text{odd}}(x') \, \dd x' \bigg)~,
    \label{eq:WKB_solutions}
\end{align}
where $x_i$ is an arbitrary reference point.
The two functions $\psi^{\rm WKB}_\pm$ are independent solutions, at least order by order in $\eta$, to the differential equation \eqref{eq:Schrodinger_type_eq}  and span the basis of its solution space, provided that the ansatz \eqref{eq:WKBformal} is valid.

In the following formulation of the \textit{exact} WKB analysis, it is convenient to take the reference point $x_i$ as one of the zeros of $Q_0$, called {\it turning points}. 
Hereafter, we assume that all the turning points are simple, i.e. the order of $x_i$ is 1, or $\dd Q_0 / \dd x \, \vert_{x = x_i} \ne 0$, which is true for the black hole spacetimes we consider in this work. 
Since $Q$ is a meromorphic function of $x$, and so is $Q_0$, $S_{-1} = \sqrt{Q_0}$ is two-valued. As the recursion formula \eqref{eq:recursion} shows, each odd-order term in $S$ inherits the multi-valuedness, while even-order terms are all single-valued. The WKB solutions $\psi_\pm^{\rm WKB}$ consist of $S_{\rm odd}$ alone, and thus we must take good care of the branches of $S_{\rm odd}$. The multi-valuedness of $S_{\rm odd}$ originates from that of $S_{-1} = \sqrt{Q_0}$, and the singular behavior in $S_{2j-1}$ around each simple turning point $x_i$ is only of the form $(x-x_i)^{-m/2}$ with $m$ an odd integer \cite{kawai2005algebraic}. 
For this reason,  the integral \eqref{eq:WKB_solutions} can be defined by half of the contour integral,
\begin{align}
\label{eq:path_integral}
    \int^x_{x_i} S_{\text{odd}} \, \dd x'
    =\frac{1}{2}\int_{\gamma_x}S_{\text{odd}} \, \dd x'~,
\end{align}
along a path $\gamma_x$, which starts from the counterpart point of $x$ on the second Riemann sheet, goes around the turning point $x_i$ while passing a branch cut associated with $x_i$, and runs through $x$ on the first Riemann sheet (see figure \ref{fig:normalization}).%
\footnote{Since the sign of $S_{\text{odd}}$ flips on the second sheet of the Riemann surface, the contributions from the upper and lower contours in figure~\ref{fig:normalization} are the same. For some basics about Riemann surfaces, we refer readers to, e.g.~\cite{Needham:579291,FarkasKra1992,Donaldson2011}.}
The aforementioned singularities $(x-x_i)^{-m/2}$ are canceled in $\gamma_x$, and the integral in eq.~\eqref{eq:WKB_solutions} becomes finite.
Note that double poles in $Q_0$ correspond to regular singular points of the original differential equation.
Note also that, while the exponential part of $\psi_\pm^{\rm WKB}$ in eq.~\eqref{eq:WKB_solutions} is only two-valued, $\psi_\pm^{\rm WKB}$ as a whole contains additional branches due to the presence of $S_{\rm odd}^{-1/2} \sim (\eta^2 Q_0)^{-1/4}$.
\begin{figure}
    \begin{tabular}{cc}
        \begin{minipage}[b]{0.48\linewidth}
        \centering
        \includegraphics[keepaspectratio, scale=0.28]{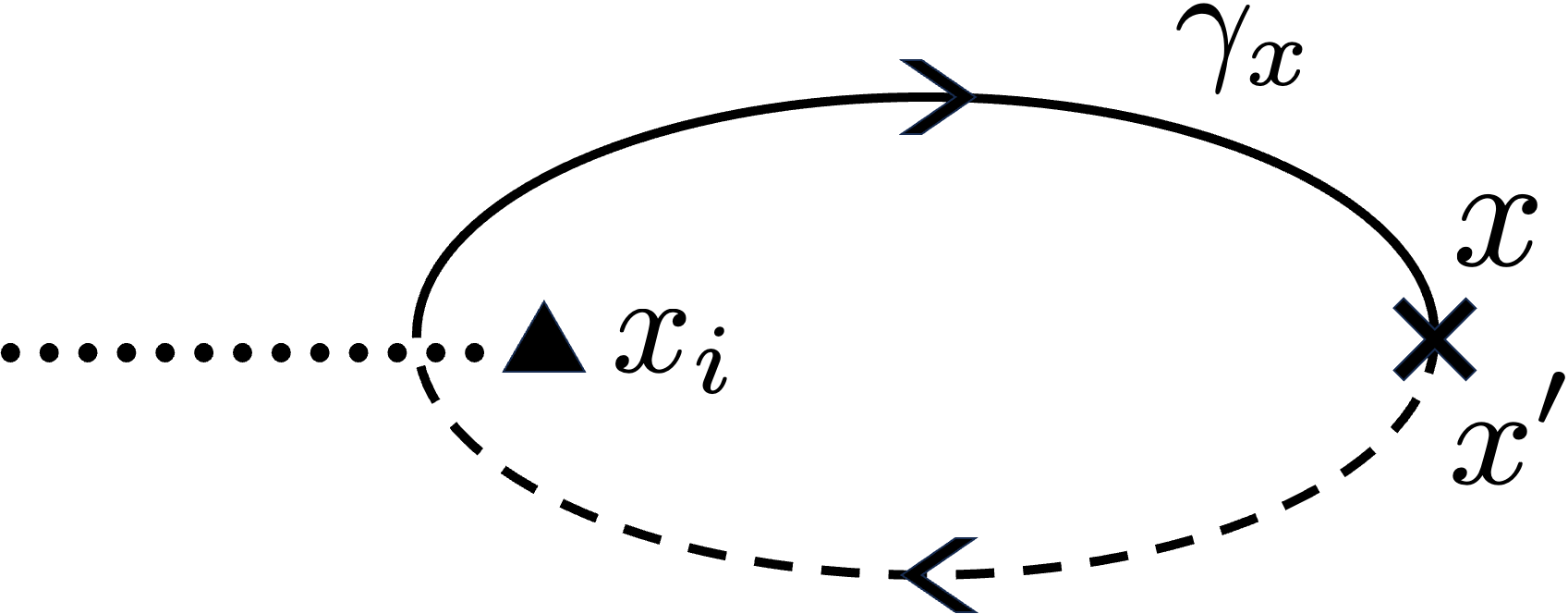}
        \end{minipage}&
        \begin{minipage}[b]{0.48\linewidth}
        \centering
        \includegraphics[keepaspectratio, scale=0.28]{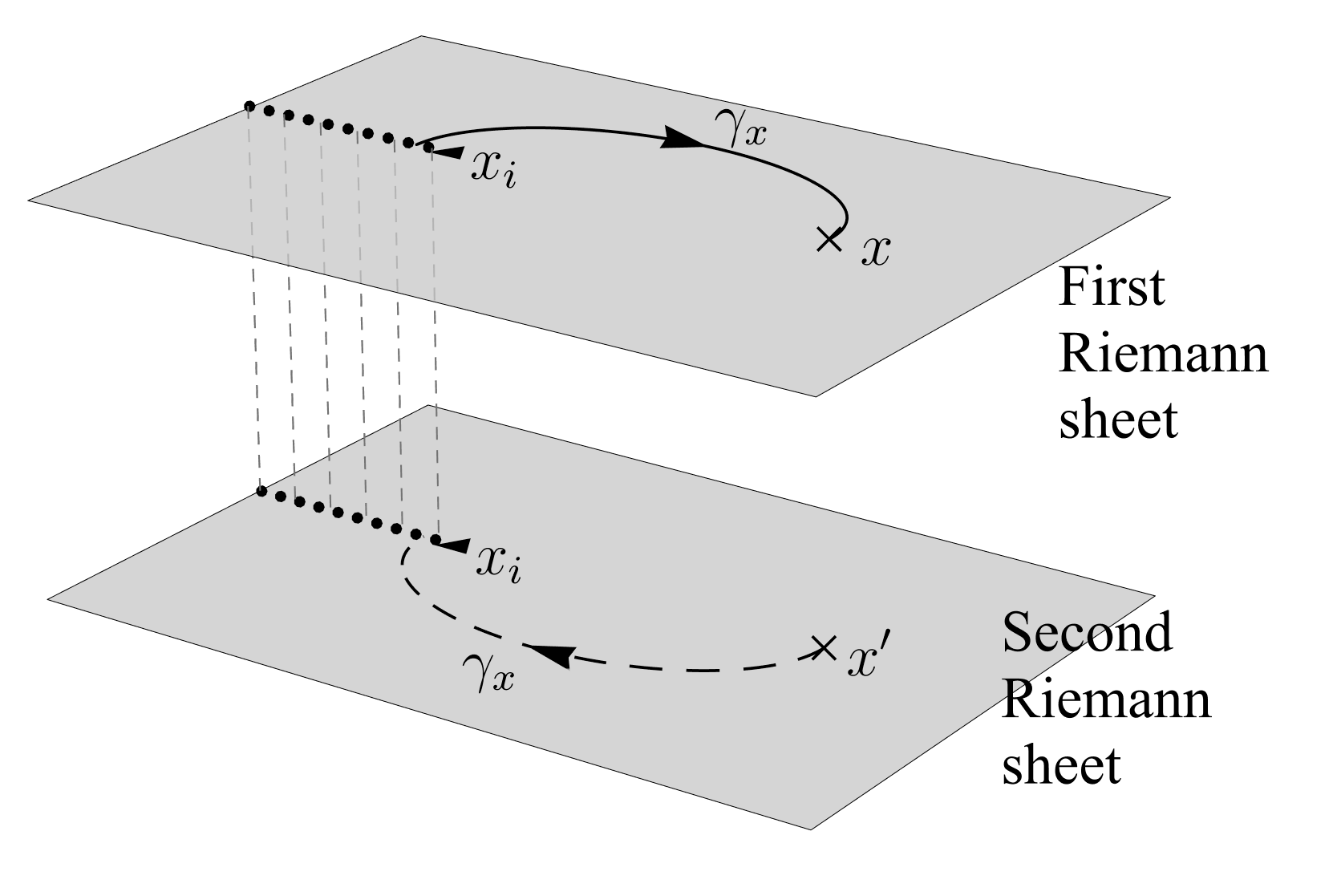}
        \end{minipage}
    \end{tabular}
    
    \caption{(Left) The contour $\gamma_x$. The black triangle $x_i$ is a turning point, and the dotted line is the branch cut of $\sqrt{Q_0(x)}$.
    The point $x$ $(x')$ is on the first (second) Riemann sheet.
    We use dotted lines for branch cuts throughout this paper.
    (Right) The 3D representation of the Riemann surface and contour $\gamma_x$. The first and second Riemann sheets are connected by branch cuts from the tuning point $x_i$ (dotted line).}
    \label{fig:normalization}
\end{figure}

The two formal WKB solutions $\psi_\pm^{\rm WKB}$ in eq.~\eqref{eq:WKB_solutions} would serve as the true independent solutions to the differential equation \eqref{eq:Schrodinger_type_eq} if the series in $\eta$ were convergent. However, this is not the case in most, if not all, physically interesting systems. For example, even in the simplest but nontrivial case with $Q=Q_0=x$, the series \eqref{eq:S_series} computed by the recursion relation \eqref{eq:recursion} is divergent with $0$ convergence radius for all complex values of $x$ (or equivalently $\eta$ in this case) except for exactly $x=0$. This does not necessarily come as a surprise, as the WKB series is a singular perturbation expansion, as mentioned below eq.~\eqref{eq:recursion}. As long as the terms $Q_j$ are suitably chosen, the WKB solutions may grasp correct asymptotic behaviors with arbitrary precision. However, in general, they cannot cover the entire domain of a solution and require an additional technique for a global analysis of the system. In particular, asymptotic expansions may change over a (complex) domain of $x$, called \textit{Stokes phenomenon}, which occurs when $x$ is continuously changed to cross so-called \textit{a Stokes curve}. This behavior cannot be captured by a divergent WKB series. To resolve the issue, we employ the \textit{Borel summation} to re-sum the series to a convergent one and to give meaningful interpretations of its global behaviors in the following subsections.

\subsection{Borel summation}
\label{subsec:BolreSum_WKB}

The formal WKB solutions \eqref{eq:WKB_solutions} can be further expanded as a power series in $\eta^{-1}$. The leading order in $S_{\rm odd}(x,\eta)$ is of the order $\eta$ and thus stays in the exponent in $\psi_\pm^{\rm WKB}$; the remaining terms in $S_{\rm odd}$ are suppressed by the formally large parameter $\eta$ and can be expanded, giving the expression,%
\footnote{The factor $-1/2$ on the power of $\eta$ comes from $1/\sqrt{S_{\text{odd}}}$ in \eqref{eq:WKB_solutions}.}
\begin{align}
    \psi^{\text{WKB}}_{\pm}(x,\eta)=\ee^{\pm\eta y_0(x)}
    \sum_{n=0}^\infty\psi_{\pm,n}(x) \, \eta^{-n-1/2}~,
    \label{eq:WKB_solutions_expansion}
\end{align}
where
\begin{align}
    y_0(x)=\int_{x_i}^x\sqrt{Q_0(x')} \, \dd x'~.
\end{align}
Generally, the coefficients $\psi_{\pm,n}$ diverge for large $n$, even with a vanishing convergence radius, in physically interesting circumstances,
and the WKB solutions themselves are ill-defined.
The strength of the divergence often falls into the so-called Gevrey-$1$ category, namely $\vert \psi_{\pm,n} \vert \le C A^n n!$ for large $n$ with some positive constants $A$ and $C$. In the following formulation, we always assume that the WKB series $\psi_{\pm,n}$ be Gevrey-$1$.

The idea of exact WKB analysis is to apply {\it the Borel summation} to this series.
The Borel summation is a resummation method that gives an analytical meaning to the Gevrey-1 formal series.
It consists of the {\it Borel transformation}
and the Laplace transform.
The Borel sum $\mathcal{S}[\psi_\pm](x,\eta)$ of the original WKB series $\psi_\pm^{\rm WKB}(x,\eta)$ is defined as the Laplace transform of the Borel-transformed function $\mathcal{B}[\psi_\pm](x,y)$ associated with $\psi_\pm^{\rm WKB}(x,\eta)$, i.e.,
\begin{align}
    \mathcal{S}[\psi_{\pm}](x,\eta) := \int_{\mp y_0(x)}^\infty {\rm e}^{-\eta y}\mathcal{B}[\psi_{\pm}](x,y) \, \dd y~.
    \label{eq:BorelSum}
\end{align}
where the integration path is $y\in\{\mp y_0(x)+t;\, t\geq 0\}$~%
\footnote{The integration path is parallel to the real axis when $\eta$ is (positive) real. In the case of complex $\eta = \vert \eta \vert {\rm e}^{i \theta}$, the path is suitably deformed such that the integral starts from $y_0(x)$ and runs to infinity with the angle $-\theta$.}
and
\begin{align}
    \mathcal{B}[\psi_{\pm}](x,y) := \sum_{n=0}^\infty \frac{\psi_{\pm,n}(x)}{\Gamma(n+1/2)} \left[ y\pm y_0(x) \right]^{n-1/2},
    \label{eq:BorelTransformation}
\end{align}
is the Borel transformation of $\psi_\pm^{\rm WKB}$.
If $\sqrt{y \pm y_0} \, \mathcal{B}[\psi_\pm]$ converges at $y=\mp y_0$ and can be analytically continued in the region that contains the integration path, and if the Laplace integral in eq.~\eqref{eq:BorelSum} yields a finite and definite value for sufficiently large $\eta$, then the original WKB solutions are said to be {\it Borel summable}.
Under the condition of Borel summability, that the original WKB approximation coincides with the asymptotic expansion of the Borel-summed solution is ensured by the so-called Watson's lemma \cite{watson1912vii,sokal1980,Aoki2019}.

The Borel summability depends on the variable $x$, and it is known that the Borel summability may not hold on the {\it Stokes curves}, which are defined on the complex $x$ plane by the following criterion
\begin{align}
    {\rm Im}\big[ y_0(x) \big] = {\rm Im} \int_{x_i}^x \sqrt{Q_0(x')} \, \dd x'=0.
    \label{eq:SL_definition}
\end{align}
When all turning points of $Q_0$ are simple, and the order of poles is higher than or equal to two,%
\footnote{A simple pole has a roll of turning point and gives different connection formulae \cite{Koike:2000}.}
three Stokes curves emerge from each turning point and flow into singular points (poles) or turning points  (see figures~ \ref{fig:SL_harmonic}, \ref{fig:SL_schematic_Morse}, \ref{fig:StokesLines_qnm} and \ref{fig:SL_schematic_Schwarzschild} for example).
When the Stokes geometry is \textit{non-degenerate}, that is, when 
none of the Stokes curves connects two turning points or forms a loop emerging from and going into the same turning point, the WKB solutions are Borel summable in each {\it Stokes region} surrounded by Stokes curves.%
\footnote{This claim is noted in e.g.~\cite{Aoki2019} and expected to be proven in \cite{Koike:inprep}, which has not been completed due to the passing of Koike in 2018.}

\subsection{Connection formula}
\label{subsec:StokesLines_Connection}
One of the WKB solutions $\psi_\pm^{\rm WKB}$ is not Borel summable on a Stokes curve, due to the fact that its integration path in eq.~\eqref{eq:BorelSum} hits one of the singular points $y=\pm y_0(x)$ and the corresponding branch cut on the complex $y$ plane that is originated from the fractional power in eq.~\eqref{eq:BorelTransformation}.%
\footnote{This singular point is at $y=+y_0$ for $\mathcal{B}[\psi_+]$ and $y=-y_0$ for $\mathcal{B}[\psi_-]$, and thus it is not obvious from the expression \eqref{eq:BorelTransformation} that $y=-y_0$ (respectively $y=+y_0$) forms a singularity on the $y$ plane for $\mathcal{B}[\psi_+]$ (respectively $\mathcal{B}[\psi_-]$). However, $\mathcal{B}[\psi_\pm]$ should solve the (partial) differential equation that is derived by replacing $\eta$ by $\partial / \partial y$ in the original Schr\"{o}dinger-type equation \eqref{eq:Schrodinger_type_eq}. Provided that $Q(x,\eta)$ only contains even-order terms in $\eta$, which we have been assuming, the resultant differential equation has symmetry under $y \to -y$. Hence each of $\mathcal{B}[\psi_\pm]$ carries both of the singular points at $y=\pm y_0$.}
The WKB solutions are Borel summable within each Stokes region, provided the Stokes geometry is non-degenerate. 
When the variable $x$ is moved to cross a Stokes curve, the Stokes phenomenon occurs, and asymptotic behaviors of the solutions change. Since they form asymptotic expansions, the WKB solutions constructed in different Stokes regions, in general, differ from each other and are related through their linear combinations because the differential equation \eqref{eq:Schrodinger_type_eq} is linear.
This means that, when the WKB solutions constructed in one Stokes region are analytically continued across a Stokes curve, the integration path in the Borel sum \eqref{eq:BorelSum} needs to be deformed, and one finds different parts of the deformed path correspond to the Borel sum of the WKB solutions constructed in the other region.

In order to summarize how the Borel-summed, exact WKB solutions in different Stokes regions are connected,
let us consider a Stokes curve $\Gamma$ emanating from a turning point $x_i$.
If
\begin{align}
    \label{eq:dominance_positive}
    \text{Re}\, \int^x_{x_i} \sqrt{Q_0(x)} \, \dd x > 0 \quad \text{along $\Gamma$}~,
\end{align}
the exact WKB solution $\mathcal{S}[\psi_+]$ is said to be dominant over $\mathcal{S}[\psi_-]$ on $\Gamma$, that is, the latter is exponentially suppressed compared to the former.
Else if
\begin{align}
    \label{eq:dominance_negative}
    \text{Re}\, \int^x_{x_i} \sqrt{Q_0(x)} \, \dd x < 0 \quad \text{along $\Gamma$}~,
\end{align}
the other solution $\mathcal{S}[\psi_-]$ is dominant over $\mathcal{S}[\psi_+]$ on $\Gamma$.%
\footnote{From the Cauchy-Riemann equations, it can be shown that the real part of the integral in eq.~\eqref{eq:dominance_positive} (eq.~\eqref{eq:dominance_negative}) is monotonically increasing (decreasing), provided that the Stokes curve $\Gamma$ does not connect multiple turning points.}
Now we consider analytically continuing the solutions in two regions I and II separated by a Stokes curve $\Gamma$. 
Only the dominant solution, of which the Borel transformation runs into the singular point at $y=\pm y_0$ along its integration path, experiences the Stokes phenomenon, while the subdominant ones in the two regions coincide. As a result,
the Borel-summed WKB solutions in region I are related to the solution in region II by the connection formulae
\begin{align}
\label{eq:connection_formulae}
\begin{cases}
    \mathcal{S}[\psi_+^{\text{I}}] = \mathcal{S}[\psi_+^{\text{II}}] + i\mathcal{S}[\psi_-^{\text{II}}]~, \quad \mathcal{S}[\psi_-^{\text{I}}] = \mathcal{S}[\psi_-^{\text{II}}]~, \qquad & (\text{$\psi_+$ is dominant on $\Gamma$})~, \vspace{3mm} \\ 
    \mathcal{S}[\psi_-^{\text{I}}] = \mathcal{S}[\psi_-^{\text{II}}] + i\mathcal{S}[\psi_+^{\text{II}}]~, \quad \mathcal{S}[\psi_+^{\text{I}}] = \mathcal{S}[\psi_+^{\text{II}}]~, \qquad & (\text{$\psi_-$ is dominant on $\Gamma$})~,
\end{cases}
\end{align}
\if0
\begin{itemize}
    \item When $\psi_+$ is dominant on $\Gamma$
    \begin{align}
    \mathcal{S}[\psi_+^{\text{I}}] &= \mathcal{S}[\psi_+^{\text{II}}] + i\mathcal{S}[\psi_-^{\text{II}}]~, & \mathcal{S}[\psi_-^{\text{I}}] &= \mathcal{S}[\psi_-^{\text{II}}]~,
    \end{align}
    \item When $\psi_-$ is dominant on $\Gamma$
    \begin{align}
    \mathcal{S}[\psi_-^{\text{I}}] &= \mathcal{S}[\psi_-^{\text{II}}] + i\mathcal{S}[\psi_+^{\text{II}}]~, & \mathcal{S}[\psi_+^{\text{I}}] &= \mathcal{S}[\psi_+^{\text{II}}]~,
    \end{align}
\end{itemize}
\fi
when the analytic continuation is performed by a path crossing the Stokes curve $\Gamma$ counterclockwise around the associated turning point.
The direction of crossing is determined locally in the vicinity of the turning point in relation to the Stokes curved attached to it.
The connection formulae can be compactly written in the $2\times 2$ matrix form
\begin{align}
\label{eq:connection_matrix}
    \begin{pmatrix}
        \psi_+\\
        \psi_-
    \end{pmatrix}^{\text{I}}
    = V_{\pm}
    \begin{pmatrix}
        \psi_+\\
        \psi_-
    \end{pmatrix}^{\text{II}}~,
\end{align}
with
\begin{align}\label{eq:connectionS}
            V_+ =
            \begin{pmatrix}
                1 & i\\
                0 & 1
            \end{pmatrix}~, 
            \qquad 
            V_- =
            \begin{pmatrix}
                1 & 0\\
                i & 1
            \end{pmatrix}~.
    \end{align}
The $+$ ($-$) subscript corresponds to the case when the solution $\psi_{+}$ ($\psi_{-}$) is dominant on the Stokes curve $\Gamma$.
For a clockwise crossing, the sign in front of $i$ is flipped.
This change of coefficients reflects the effect of the Stokes phenomenon.
Note that the coefficient $i$ is given exactly without relying on any approximation, provided the Borel summability is assured and the WKB integral is normalized at the turning point $x_i$ the Stokes curve is attached to as in eq.~\eqref{eq:WKB_solutions}. Therefore one can freely perform analytic continuation at any point on the complex $x$ plane to span all the regions.

\begin{figure}
    \centering
    \includegraphics[width=0.35\linewidth]{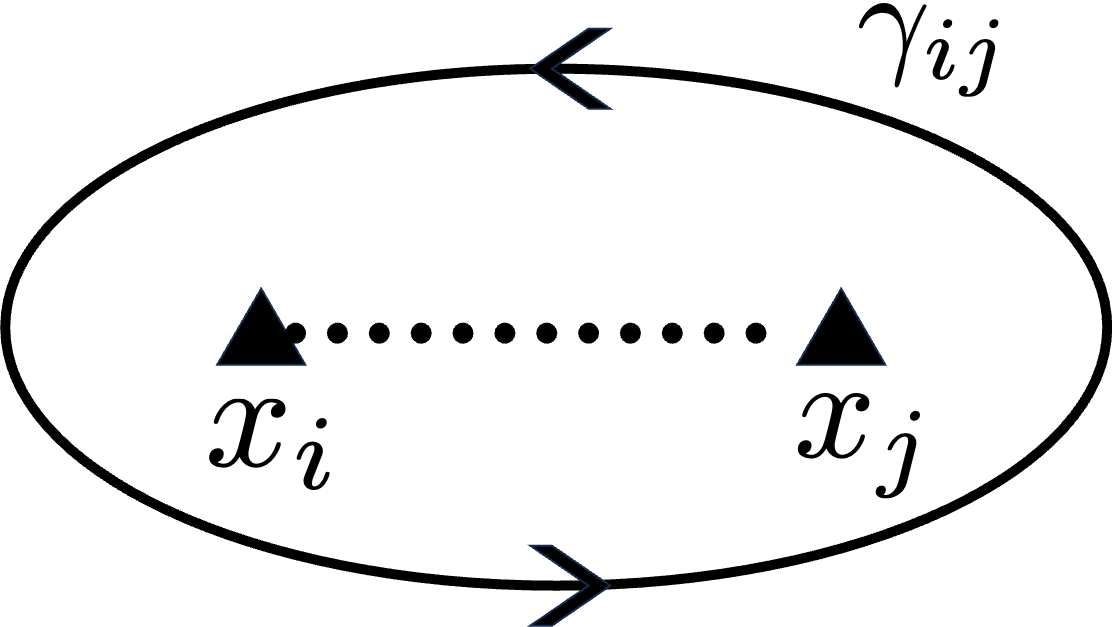}
    \caption{The contour $\gamma_{ij}$. The black triangles $x_{i,j}$ are turning points, and the dotted line is the branch cut of $\sqrt{Q_0(x)}$. The orientation is defined by the definition of $\gamma_{x}$ in figure~\ref{fig:normalization}.
    }
    \label{fig:gamma_ij}
\end{figure}
\begin{figure}
    \centering
    \includegraphics[width=0.35\linewidth]{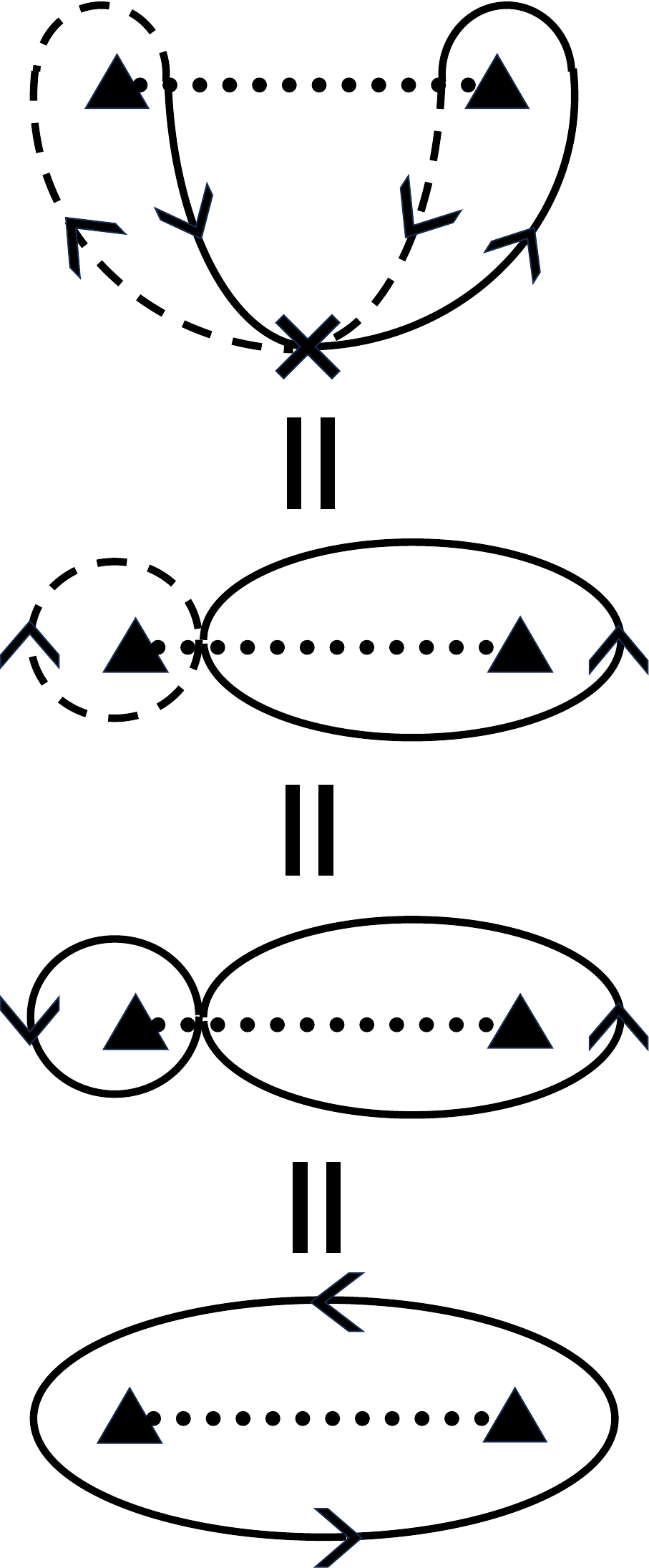}
    \caption{
    The deformation of the path to $\gamma_{ij}$.
    Firstly, As in figure \ref{fig:normalization}, one can redraw the original path $x_i \to x \to x_j$ to look like the top path in this figure. 
    In the first equality, one merely deforms the path.
    In the second equality, we can replace the path in the second Riemann sheet (dashed curve) to the path in the first Riemann sheet (solid curve).
    At this point, because of the branch of the square root function, there is a phase difference of $\pi$, resulting in the path being traversed in the opposite direction. 
    Finally, by deforming the path, we obtain the path as in the figure \ref{fig:gamma_ij}.
    }
    \label{fig:PuzzleRing}
\end{figure}

In general, multiple turning points exist, requiring us to cross Stokes curves originating from different turning points. In such a situation, one needs to change the reference turning points of the integral in eq.~\eqref{eq:WKB_solutions} as one encounters a new Stokes curve,
since the connection formulae \eqref{eq:connection_formulae} apply to the WKB solutions \eqref{eq:WKB_solutions} that are normalized at the specific turning point attached to the Stokes curve that is being crossed. 
In changing turning points from $x_i$ to $x_j$, the WKB solutions acquire a (complex) phase
\begin{align}
    \ee^{\pm \int_{x_i}^{x_j} S_{\rm odd} \, \dd x}~,
\end{align}
which is evident from the definition~\eqref{eq:WKB_solutions}. Thus, we multiply the matrix
\begin{align}\label{eq:connectionT}
    \Gamma_{ij} =
    \begin{pmatrix}
        u_{ij} & 0\\
        0 & u_{ij}^{-1}
    \end{pmatrix}~, \qquad
    u_{ij} = \exp\bigg(\int_{x_i}^{x_j} S_{\text{odd}}\, \dd x\bigg)~,
\end{align}
whenever we need to change the reference points.
The integral in $u_{ij}$ should be evaluated according to the path described in eq.~\eqref{eq:path_integral} and figure \ref{fig:normalization}. This time, however, the integral path starts from a turning point $x_i$, runs to a point $x$ in the Stokes region where the switching of turning points takes place, and then to another turning point $x_j$, which, combined, forms a closed contour. Thus, $u_{ij}$ is evaluated as
\begin{align}
\label{eq:uij}
    u_{ij} = \exp\left( \frac{1}{2} \oint_{\gamma_{ij}} S_{\rm odd} \, \dd x \right) \;
\end{align}
where the closed contour $\gamma_{ij}$ is schematically shown in figure \ref{fig:gamma_ij}. The direction of the contour to be evaluated is fixed by inspecting the original path $x_i \to x \to x_j$.
In figure~\ref{fig:PuzzleRing}, we show an example of such a path deformation to $\gamma_{ij}$.
Note that, from the definition, $u_{21} = u_{12}^{-1}$ and $\Gamma_{21} = \Gamma_{12}^{-1}$.

The solutions also acquire a nontrivial phase factor when one needs to cross a branch cut emanating from a
singular point in $Q_0(x)$.%
\footnote{In principle, crossing the branch cut attached to a turning point can yield a phase factor as well, but in such a case the solution under consideration only enters in another Riemann sheet, and thus the phase is trivial.} 
This branch cut is not necessarily the ones attached to turning points and hence does not always invoke branches for the exponent $\exp \big( \int_{x_i}^x S_{\rm odd} \, \dd x \big)$ in the WKB solutions. Instead, it is associated with the factor $1/\sqrt{S_{\rm odd}}$. In the case of regular singular points, i.e.~$Q_0 \propto (x- b_i)^{-2}$ around the $i$-th singular point $b_i$, $1/\sqrt{S_{\rm odd}} \sim (\eta^2 Q_0)^{-1/4}$ is two-valued around $b_i$, while the exponential factor is single-valued. When crossing the branch cut from $b_i$, one goes to the second Riemann sheet; when coming back to the same point on the first sheet, $1/\sqrt{S_{\rm odd}}$ changes its phase by $\pi$, and the exponential factor $\exp \big( \int_{x_i}^x S_{\rm odd} \, \dd x \big) = \exp \big( \frac{1}{2} \oint_{\gamma_x} S_{\rm odd} \, \dd x \big)$ picks up the residue at $b_i$.
In total, one obtains a phase factor
\begin{align}
\label{eq:crossing_branchcut}
    \nu_i^{\pm} &= \exp\left[ i\pi \, \big( 1\pm 2 \underset{x=b_i}{\text{Res}}\,S_\text{odd} \big) \right]~,
\end{align}
when $\psi_{\pm}$, respectively, cross a branch cut associated with a regular singular point. In terms of the $2\times 2$ matrix form, this is 
    \begin{align}
        D_i &= 
        \begin{pmatrix}
            \nu_i^+ & 0\\
            0 & \nu_i^-
        \end{pmatrix}~.
        \label{eq:MatrixD}
    \end{align}
Notice that the connection matrices $V_\pm$ contain off-diagonal components and mix the $\pm$ solutions, while $\Gamma_{ij}$ and $D_i$ only rescale each of $\psi_\pm$ individually. Yet, it is essential to properly include $\Gamma_{ij}$ and $D_i$ along with $V_\pm$ in order to obtain correct solutions to the original differential equation. 

The recipes to analytically continue the exact WKB solutions in two Stokes regions $A$ and $B$ can be summarized as follows. First, one draws Stokes curves and branch cuts on the complex $x$ plane and determines the dominance relation of the WKB solutions. Then, consider the contour connecting region $A$ to $B$. 
We start with the vector of the Borel-summed WKB solutions constructed in region $A$, i.e.,%
\footnote{Strictly speaking, $\psi_\pm$ should be written as $\mathcal{S}[\psi_\pm]$, but we abuse the notation here for concise expressions. Hereafter, the Borel summation is assumed whenever no ambiguity is expected.}
\begin{align}
\begin{pmatrix}
    \psi_+\\ \psi_-
\end{pmatrix}^{(A)} \; ,
\end{align}
and express it as the same vector in region $B$ multiplied from the left by the matrix $V_{\pm}$ (see eq.~\eqref{eq:connectionS}), i.e.,
\begin{align}
    \label{eq:connection_summary}
    \begin{pmatrix}
    \psi_+\\ \psi_-
    \end{pmatrix}^{(A)}
    = 
    V_\pm
    \begin{pmatrix}
    \psi_+\\ \psi_-
    \end{pmatrix}^{(B)} \; ,
\end{align}
when the contour crosses a Stokes curve anti-clockwise (multiplied by $V_{\pm}^{-1}$ when crossing clockwise). 
When crossing a Stokes curve originating from $x_i$, followed by another one originating from $x_j$, multiply $\Gamma_{ij}$ in eq.~\eqref{eq:connectionT} before crossing the latter. Finally, when the contour encounters a branch cut emanating from a regular singular point $b_i$, multiply $D_i$ shown in eq.~\eqref{eq:MatrixD}. In the next section, we show how to apply these recipes to the computation of the eigenvalues of two-point boundary value problems.

\section{Example of exact WKB analysis}
\label{sec:Eample}

In this section, we demonstrate how to utilize the exact WKB analysis in simple examples, before investigating our main interest of QNMs of black-hole perturbations in the following section.
In order to show the core of the analysis, we apply the exact WKB method to two well-known problems in quantum mechanics: a harmonic oscillator (section \ref{subsec:Harmonic_oscillator}) and the Morse potential (section \ref{subsec:Morse_potential}). 
Readers who are interested in the final result of black hole QNMs may skip this section and jump to section \ref{sec:Schwarzschild}.

In section~\ref{subsec:Harmonic_oscillator}, we consider a harmonic oscillator to illustrate how the exact WKB analysis can be applied to an eigenvalue problem.
In section~\ref{subsec:Morse_potential}, we introduce the Morse potential, which exhibits a singularity structure similar to that of the Schwarzschild black hole.
In this example, we demonstrate the existence of a logarithmic spiral of Stokes curves around a regular singular point, as well as the necessity of deriving a connection formula. To our knowledge, this is a novel application of exact WKB analysis to an eigenvalue problem with the Morse potential.\footnote{After posting our paper, another paper \cite{Morikawa:2025grx} on a similar calculation for the Morse potential appeared on arXiv.}

\subsection{Harmonic oscillator}
\label{subsec:Harmonic_oscillator}

Let us first consider the simplest problem, finding the eigenvalues of a harmonic oscillator.
The potential of the current problem is given by
\begin{align}\label{eq:harmonicpot}
    Q(x,\eta) = Q_0(x) =\frac{1}{4} \, x^2-E\,,
\end{align}
where $E$ is a constant. The eigenvalue $E$ is determined by imposing the wave function to decay as $x \to \pm \infty$.
From the standard quantum mechanics textbooks, we know that the eigenvalues are~\footnote{In quantum mechanics, the $\eta^{-1}$ should be identified as the Planck constant, i.e., $\eta^{-1}=\hbar$.}
\begin{align}\label{eq:eigen_harmonic}
    E &=\eta^{-1}\left(n+ \frac{1}{2}\right), & n&=0,1,2,\cdots~.
\end{align}
Let us see how this result can be obtained from the exact WKB analysis.

The first step of the exact WKB analysis is to draw the Stokes curves according to their definition~\eqref{eq:SL_definition}. 
Those of the potential \eqref{eq:harmonicpot} are depicted in figure \ref{fig:SL_harmonic}.%
\footnote{It is implicitly assumed that $E$ is real and positive. While this is obvious from the physics point of view, the exact WKB analysis also derives this result. Even if one starts from the Stokes geometry with a complex $E$, the requirement to satisfy the boundary conditions results in essentially the same one as below, i.e.~$u_{21}^2 = -1$ with $u_{21}$ given in eq.~\eqref{eq:u21_harmonic}, giving real and positive eigenvalues. In particular, for a real and negative $E$, the path of the analytic continuation along the real axis of $x$ crosses no Stokes curves, and therefore no negative eigenvalues are present for the given boundary conditions.}
In this regard, the domain of $x$ is extended to the entire complex plane.
To determine the branch, we choose $\sqrt{Q(x)}>0$ for $x>2\sqrt{E}$ and take the branch cut between two turning points, $x_1=-2\sqrt{E}$ and $x_2=2\sqrt{E}$. 
With this choice, the asymptotic form of the WKB solutions as $x\to +\infty$ is 
\begin{align}
    \psi^{\rm WKB}_{\pm}(x,\eta) \to \left( \frac{4}{\eta^2 x^2} \right)^{1/4} \exp\left(\pm \frac{\eta x^2}{4} \right)~.
\end{align}
Thus, $\psi_+$ is dominant on the Stokes curve $\{ x \, \vert \, x\in \mathbb{R}, x > x_2\}$. Respecting the symmetry of the system under $x \to -x$, the solution $\psi_+$ is also dominant on the Stokes curve $\{ x \, \vert \, x\in \mathbb{R}, x < x_1\}$. Other cases can be analyzed similarly (see figure \ref{fig:SL_harmonic}).%
\footnote{Once the dominance relation is known for one Stokes curve coming out of a turning point, the easiest way to find the relation for other Stokes curves from the same turning point is by the fact that the dominance relation alternates between adjacent Stokes curves. Namely, if $\psi_+$ is dominant on a Stokes curve, then $\psi_-$ is dominant on the (two) Stokes curves next to it. This alternating relation continues through Riemann sheets when a branch cut is crossed.}
\begin{figure}
    \centering
    \includegraphics[width=0.5\linewidth]{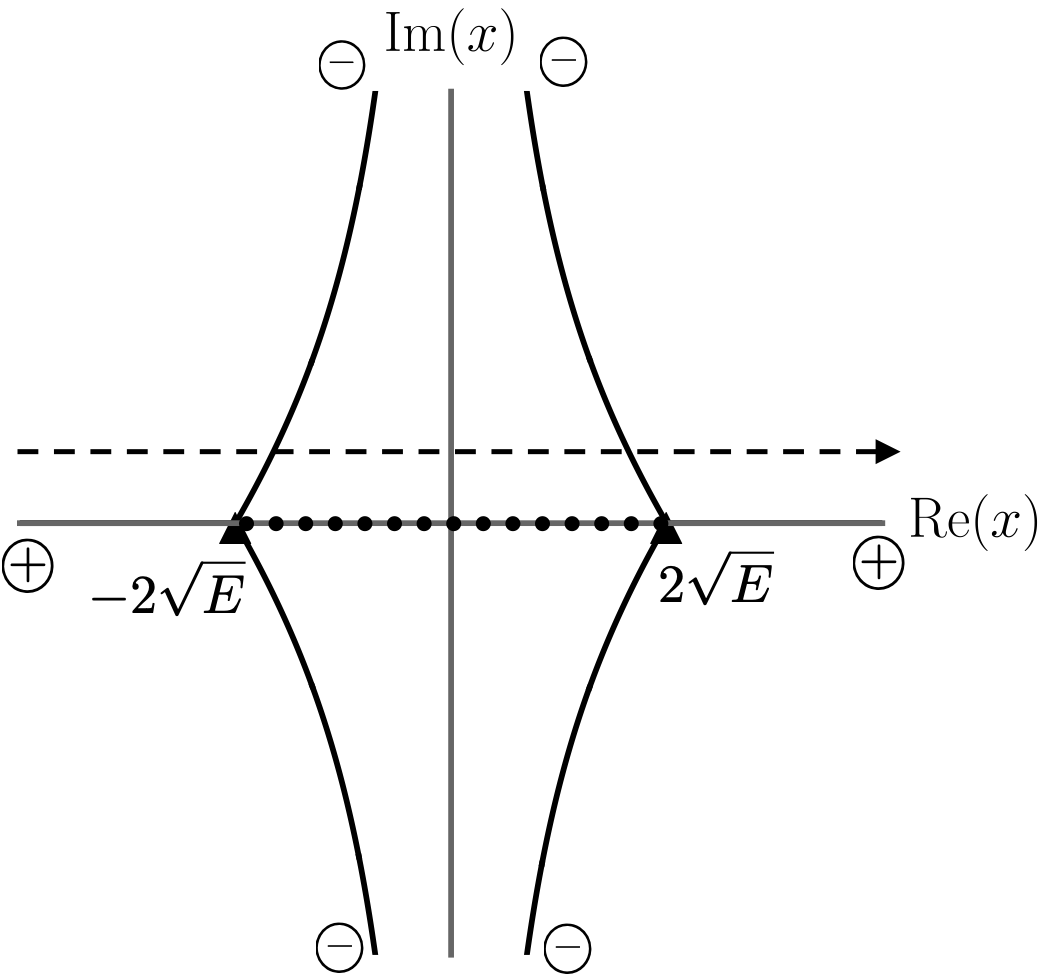}
    \caption{The Stokes geometry of the potential $Q(x)=x^2/4-E$ on the complex $x$ plane, with horizontal and vertical axes being those of ${\rm Re} \, (x)$ and ${\rm Im} \, (x)$, respectively.
    The triangles denote the turning points.
    To determine the branch, we choose $\sqrt{Q(x)}>0$ for $x>2\sqrt{E}$. The branch cut is shown as the dotted line. The dominant WKB solution, $\psi^{\rm WKB}_{+}$ or $\psi^{\rm WKB}_{-}$, on each Stokes curve is denoted by \circled{+} or \circled{-},  respectively. We use these symbols throughout this paper. The dashed arrow line describes the path of analytic continuation.}
    \label{fig:SL_harmonic}
\end{figure}

Next, let us consider the analytic continuation from $x=-\infty$ to $x = +\infty$. From figure \ref{fig:SL_harmonic}, we observe that these asymptotic points are on the Stokes curves. To avoid the failure of the analysis due to the Borel non-summability on the Stokes curves, let us deform the contour to being slightly above the real axis, i.e.~from $x=-\infty +i \epsilon$ to $x=+\infty + i \epsilon$ with $\epsilon >0$.%
\footnote{The deformation of the contour slightly below the real axis gives a slightly different relation between the wave functions but results in the same energy eigenvalues.}
We take $\epsilon \to 0$ at the end of calculations.
The contour must then cross two Stokes curves, the one emanating from $x=x_1$ and then the other from $x= x_2$, as can be seen in figure \ref{fig:SL_harmonic}. Thus, one needs to consider the Stokes phenomena twice and the effect of changing the reference points. According to the general rules reviewed in section \ref{subsec:StokesLines_Connection}, the analytic continuation of the WKB solutions is given by
\begin{align}\label{eq:AC_harmonic}
    \begin{pmatrix}
        \psi_+\\
        \psi_-
    \end{pmatrix}^{(-\infty)}
    =M
    \begin{pmatrix}
        \psi_+\\
        \psi_-
    \end{pmatrix}^{(+\infty)}~,
\end{align}
where the superscripts $\pm\infty$ indicate that $\psi_\pm$ are constructed in the Stokes regions that contain $x = \pm\infty$, respectively, and 
\begin{align}
    M=(V_-)^{-1}\Gamma_{12}(V_-)^{-1}\Gamma_{21}
    =
    \begin{pmatrix}
        1 & \quad0\\
        -i(1+u_{21}^2) &\quad 1
    \end{pmatrix}~.
\end{align}
In this computation, the inverse of $V_-$ is taken since the Stokes curves are crossed clockwise around the corresponding turning points, while $\Gamma_{12}$ and $\Gamma_{21}$ represent the effects of changing the reference points from the turning point $x_1$ to $x_2$, and then from $x_2$ to $x_1$, respectively. The $\Gamma_{21}$ operation is not necessary to obtain the condition for eigenvalues below.

To obtain the energy spectrum, we impose boundary conditions such that the solution is exponentially suppressed at $x=\pm\infty$. 
Since $\psi_-$ is subdominant to $\psi_+$  in either end, we require $\psi_+$ to vanish at both limits.
From the formula~\eqref{eq:AC_harmonic}, the solution which satisfies boundary condition at $x=-\infty$, which is the solution purely made from $\psi_{-}^{(-\infty)}$, becomes
\begin{align}
    M_{21}\psi_+^{(+\infty)} + M_{22} \psi_-^{(+\infty)}~,
\end{align}
around $x=+\infty$, up to an overall integration constant. Thus, in order for the solution to satisfy the boundary conditions at $x=\pm\infty$, i.e.~for the coefficient of $\psi_+^{(+\infty)}$ to vanish, we need to impose $M_{21}=0$. This requirement corresponds to the condition $u_{21}^2 = -1$.
We can find $u_{21}^2$ by 
\begin{align}
    \label{eq:u21_harmonic}
    u_{21}^2&=\exp\bigg(-\oint_{\gamma_{12}}S_{\text{odd}}\, \dd x \bigg)
    =\exp\bigg(2\pi i \Res{x}{\infty}S_{\text{odd}}\bigg)\notag\\
    &=\exp\bigg(2\pi i\eta \Res{x}{\infty}\sqrt{Q(x)}\bigg)
    =\exp\bigg(2\pi i \eta E\bigg)~,
\end{align}
where the integration contour $\gamma_{12}$ is defined in figure \ref{fig:gamma_ij}.
In the second equality, we use that the residue theorem for a closed contour can equivalently be expressed in terms of the residues outside the contour, with the orientation properly accounted for. The residue at the infinity, which is a singular point, can be computed by changing variables from $x$ to $y=1/x$, giving $S_{\rm odd}(x) \, \dd x = - y^{-2} S_{\rm odd}(1/y) \, \dd y$.
In the third equality, we use the property that $S_{2j-1} \, (j\geq 1)$ are holomorphic at $x=\infty$,%
\footnote{This is the characteristic of irregular singular points.}
and $S_{-1} = \eta \sqrt{Q(x)}$ is the only term that has a residue there, 
which can be easily checked from \eqref{eq:S_j}.
Therefore, the exact WKB analysis shows that, requiring $u_{21}^2=-1$, energy eigenvalues must be
\begin{align}
    E =\eta^{-1}\left(n+ \frac{1}{2}\right)~, \qquad n=0,1,2,\cdots~,
\end{align}
which nicely agrees with the known energy spectrum of the harmonic oscillator~\eqref{eq:eigen_harmonic}.
We note that one only needs to know the analytic structure of the given problem in this method, without any need of explicitly solving the differential equation.

\subsection{Morse potential}
\label{subsec:Morse_potential}

\begin{figure}[t]
    \centering
    \includegraphics[width=0.5\linewidth]{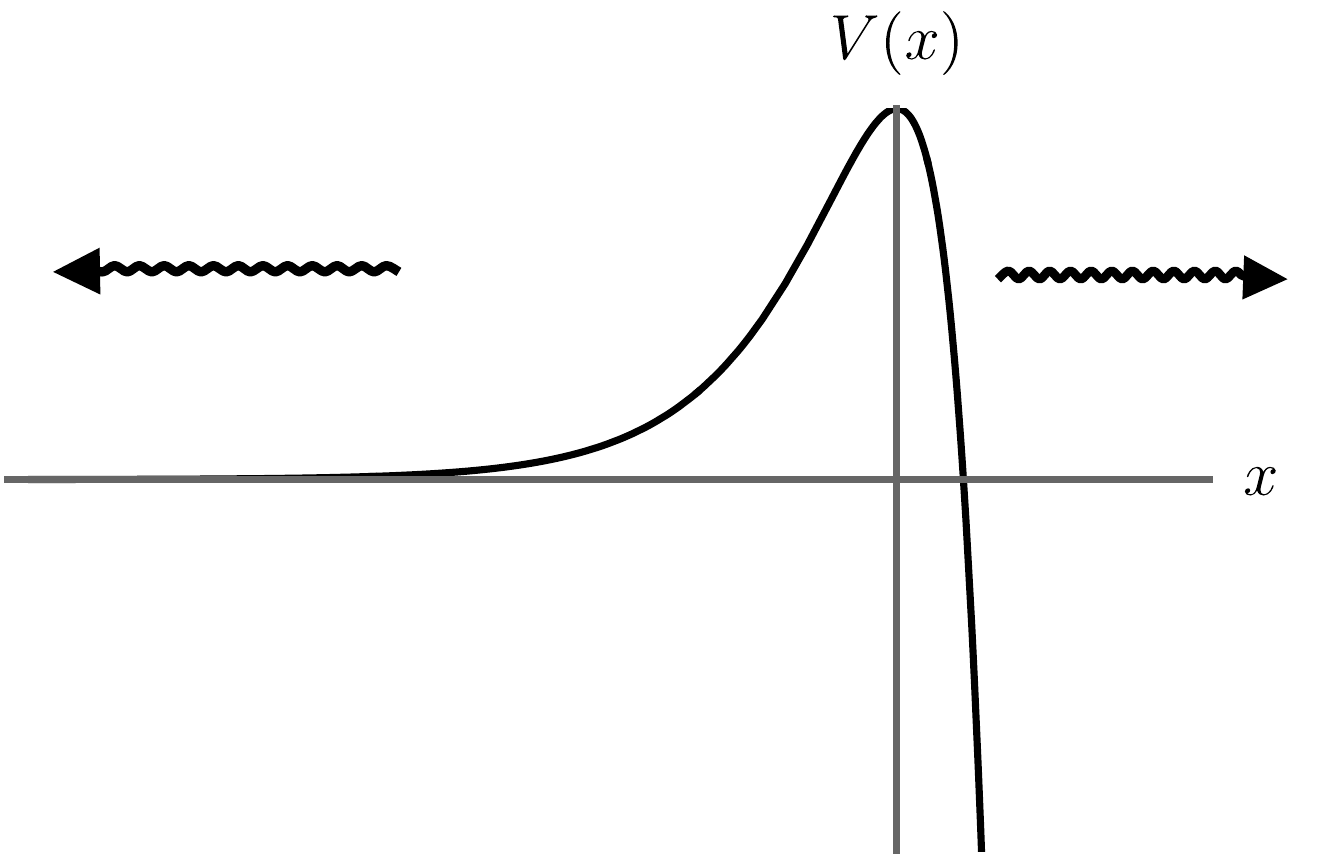}
    \caption{The schematic picture of the Morse potential in $x$ coordinate. We take the left-moving boundary condition at $x=-\infty$ (leftwards arrow wave) and the right-moving boundary condition at $x=\infty$ (rightwards arrow wave), defined below eq.~\eqref{eq:sol_morse_asymptote}.}
    \label{fig:potential_Morse}
\end{figure}
Our next example is the Morse potential, which is closer to the case of black hole perturbations than the previous one
(see~\cite{Hatsuda:2021gtn} for a review).
The corresponding differential equation is given by
\begin{align}
    \label{eq:eom_Morse_original}
    \left[ -\frac{\dd^2}{\dd x^2}+V(x) \right] \Psi=E\Psi~,
\end{align}
where the potential $V$ is
\begin{align}
    V(x)&=V_0 \left( 2 \, {\rm e}^{\beta x} - {\rm e}^{2\beta x} \right)~,
\end{align}
with $V_0>0$ and $\beta>0$ constant parameters and $-\infty<x<\infty$. See figure \ref{fig:potential_Morse} for a schematic figure of $V(x)$.
As can be seen from the figure, the Morse potential appears quite different in its asymptotic behavior from the the potential for perturbations in the Schwarzschild spacetime, which is our target in this work and is considered in section \ref{sec:Schwarzschild}. Nevertheless, in the context of the exact WKB analysis, it is appropriate as a simplified example for black hole QNMs for two reasons: (i) eq.~\eqref{eq:eom_Morse_original} is exactly solvable as seen below, and (ii) the structure of singular points of the differential equation is similar to that of the spectral problems of asymptotically flat black holes \cite{Hatsuda:2021gtn}. 
In particular, the Stokes geometry of the Morse potential contains one regular and one irregular singular point, and due to the presence of the former, a logarithmic spiral into the singularity appears, which makes it suitable as a warm-up for the black hole case.
The exact WKB technique is to reduce a problem of differential equations to that of their analytic structure, and thus the studies of systems that share analytically similar behaviors proceed in a parallel manner.

To carry on our analysis, we find it convenient to rescale the parameters as
\begin{align}
    \epsilon := \frac{E}{V_0} \; , \qquad
    g := \frac{\beta}{\sqrt{V_0}} \; ,
\end{align}
and to change variables by
\begin{align}
    z := 2 \, {\rm e}^{\beta x} \; ,
\end{align}
where the original domain $-\infty < x < \infty$ maps to $0<z<\infty$.%
\footnote{These rescaling and redefinition are chosen to make convenient the exact WKB analysis we conduct below, see eq.~\eqref{eq:eom_Morse_canonical}.}
The dimensionless parameter $g$ is real and positive, while the rescaled eigenvalue $\epsilon$ can take a complex value.
The resulting differential equation reads
\begin{align}
    \label{eq:morseschrodingereq}
    \left[ - z \, \frac{\dd}{\dd z} \left( z \, \frac{\dd}{\dd z} \right) + \frac{1}{g^2} \left( - \frac{z^2}{4} + z \right) \right] \Psi = \frac{\epsilon}{g^2} \, \Psi \; .
\end{align}
This equation has one regular singularity at $z=0$ (i.e.~$x=-\infty$) and an irregular singularity at $z=\infty$ (i.e.~$x=\infty$). We later see the presence of the regular singularity cause some complications in the exact WKB analysis, which also occurs in the case of the perturbation around the black hole background.

Before applying the exact WKB analysis, let us make use of the fact that the Morse problem is exactly solvable and
solve it using the conventional method by special functions. 
Equation~\eqref{eq:morseschrodingereq} can be solved by 
the confluent Hypergeometric functions (a.k.a.~the Kummer functions, see~\cite{NIST:DLMF} for their definitions and properties). The two independent solutions are
\begin{align}
    \Psi_1 & = \ee^{\frac{i y}{2}} y^{- \frac{i \sqrt{\epsilon}}{g}} \, 
    M(a,b, -i y) \; , \\
    \Psi_2 & = \ee^{\frac{i y}{2}} y^{- \frac{i \sqrt{\epsilon}}{g}} \, 
    U(a,b, -i y) \; ,
\end{align}
where
\begin{align}
    a = \frac{1}{2} + i \, \frac{1-\sqrt{\epsilon}}{g} \; , \qquad
    b = 1- \frac{2i \sqrt{\epsilon}}{g} \; , \qquad
    y = \frac{z}{g} \; .
\end{align}
In order to demand boundary conditions at $z = 0$ and $z=\infty$ (i.e.~boundaries infinitely far away in terms of the original variable $x$), we expand $\Psi_{1,2}$ around these asymptotic limits, giving
\begin{subequations}
\label{eq:sol_morse_asymptote}
\begin{align}
    \label{eq:sol_morse_asymptote_1}
    \Psi_1 &\to 
    \begin{cases}
        \displaystyle
       y^{- \frac{i \sqrt{\epsilon}}{g}}~, \qquad & (z \to 0)~, \vspace{2mm} \\
       \displaystyle
       \left( - i \right)^{a - b} \, \frac{\Gamma (b)}{\Gamma (a)} \, {\rm e}^{- \frac{iy}{2}} y^{- \frac{1}{2} + \frac{i}{g}} + i^{-a} \, \frac{\Gamma(b)}{\Gamma(b-a)} \, {\rm e}^{\frac{iy}{2}} y^{- \frac{1}{2} - \frac{i}{g}}~, \qquad & (z \to \infty)~,
    \end{cases} \\
    \label{eq:sol_morse_asymptote_2}
        \Psi_2 &\to 
    \begin{cases}
    \displaystyle
      \frac{\Gamma (1 - b)}{\Gamma ( a - b +1 )} \, y^{- \frac{i \sqrt{\epsilon}}{g}} 
      + \left( - i \right)^{1-b} \, \frac{\Gamma (b-1)}{\Gamma (a)} \, y^{\frac{i \sqrt{\epsilon}}{g}} ~, \qquad & (z \to 0)~, \vspace{2mm}\\
      \displaystyle
      \left( - i \right)^{-a} \, {\rm e}^{\frac{iy}{2}} \, y^{- \frac{1}{2} - \frac{i}{g}} ~, \qquad & (z \to \infty)~.
    \end{cases}
\end{align}
\end{subequations}
As boundary conditions, we impose the solution behaves as $\sim y^{- i \sqrt{\epsilon} / g}$ (we call it ``left-moving'') at the boundary $z = 0$ and as $\sim {\rm e}^{i y / 2}$ in the limit $z \to \infty$ (we call it ``right-moving'').%
\footnote{Note that the terms ``left-'' and ``right-moving'' used here are merely nomenclature and do not necessarily carry physical meanings.}
Seen from eq.~\eqref{eq:sol_morse_asymptote}, $\Psi_1$ is purely ``left-moving'' near $z=0$ (i.e.~$x=-\infty$), while $\Psi_2$ is purely ``right-moving'' around $z=\infty$ (i.e.~$x=+\infty$).
In this case, therefore, the solution satisfies the boundary conditions at both ends if and only if $\Psi_1$ and $\Psi_2$ are not linearly independent of each other. This is possible when the Wronskian between $\Psi_1$ and $\Psi_2$ vanishes.
From the property of the confluent Hypergeometric functions, the Wronskian is given by
\begin{align}
    W_{12} 
    & = \frac{\dd \Psi_1}{\dd z} \, \Psi_2 - \frac{\dd \Psi_2}{\dd z} \, \Psi_1
    = \ee^{iy} y^{- \frac{2i \sqrt{\epsilon}}{g}}
    \left(
    \frac{\dd M}{\dd z} \, U - \frac{\dd U}{\dd z} \, M
    \right)
    = \frac{-i}{z} \, \frac{\Gamma(b)}{\Gamma(a)} \; ,
\end{align}
which becomes zero only when the argument of the Gamma function in the denominator takes non-positive integers, that is, $a = \frac{1}{2} + i \, \frac{1 - \sqrt{\epsilon}}{g} = -n$ for $n=0,1,2,\dots$. In other words, the eigenvalues of $\epsilon$ are given by
\begin{align}
    \epsilon_n =\bigg[1-ig\bigg(n+\frac{1}{2}\bigg)\bigg]^2~, \qquad 
    n =0, \, 1, \, 2,\ldots~,
    \label{eq:Eigenvalue_Morse}
\end{align}
where $\sqrt{\epsilon_n}$ should be evaluated such that $1-\sqrt{\epsilon_n}$ is purely imaginary and ${\rm Im}\big( 1-\sqrt{\epsilon_n} \big) >0$.

In the case of ${\rm Re}\big(\sqrt{\epsilon}\big) < 0$, $\Psi_1$ is no longer eligible to meet the boundary condition at $z=0$. In order to impose the outgoing boundary conditions on $\Psi_2$ both at $z=0$ and at $z=\infty$, we now set the coefficient of the $y^{-i \sqrt{\epsilon}/z}$ term to $0$ in eq.~\eqref{eq:sol_morse_asymptote_2}. This occurs when $a - b + 1 = \frac{1}{2} + i \, \frac{1 + \sqrt{\epsilon}}{g}$ is a non-positive integer, resulting in
\begin{align}
    \label{eq:Eigenvalue_Morse_negative}
    \epsilon_n = \left[ - 1 + i g \left( n + \frac{1}{2} \right) \right]^2 \; , \qquad
    n = 0, \, 1, \, 2, \ldots~,
\end{align}
where $\sqrt{\epsilon_n}$ should be taken to be negative.
Although eqs.~\eqref{eq:Eigenvalue_Morse} and \eqref{eq:Eigenvalue_Morse_negative} look identical, they differ by the branch of $\sqrt{\epsilon_n}$ and by their corresponding wave functions.

Now, let us apply the exact WKB method and reproduce the above eigenvalues.
To bring eq.~\eqref{eq:morseschrodingereq} to the Schr\"{o}dinger type, we change variables from $\Psi$ to  $\psi=z^{1/2}\Psi$. Then eq.~\eqref{eq:morseschrodingereq} recasts into
%
\begin{align}
    \label{eq:eom_Morse_canonical}
    \left[ - \frac{\dd^2}{\dd z^2} + \frac{1}{g^2} \left( - \frac{1}{4} + \frac{1}{z} - \frac{\epsilon}{z^2} \right) - \frac{1}{4z^2} \right] \psi = 0 \; .
\end{align}
To construct the formal WKB series solutions, we introduce the expansion parameter $\eta$ as follows:
\begin{align}
	\left[-\frac{\dd^{2}}{\dd z^{2}} + \eta^{2}Q(z,\eta) \right]\psi = 0~,
    \label{eq:Schrodinger_type_Morse}
\end{align}
where
\begin{subequations}
\begin{align}
	&Q(z, \eta) = Q_{0}(z) + \eta^{-2}Q_{2}(z)~, \label{eq:Q_expand_Morse}\\
    \label{eq:Morse_Q0}
	&Q_{0}(z) = \frac{1}{g^2} \left( - \frac{1}{4} + \frac{1}{z} - \frac{\epsilon}{z^2} \right)~,\\
	&Q_{2}(z) =  -\frac{1}{4z^{2}} ~.
\end{align}
\end{subequations}
We insert $\eta$ such that the asymptotic behaviors of the leading WKB solutions to eq.~\eqref{eq:Schrodinger_type_Morse} around the singularities (with $\eta$ taken) match with those obtained from eq.~\eqref{eq:eom_Morse_canonical}. 
The former is found by expanding $\psi^{\rm WKB}_\pm \sim Q_0^{-1/4} \exp \left( \pm \int^z \sqrt{Q_0} \, \dd z \right)$ around the singular points $z = 0$ and $\infty$, while the latter is obtained by inserting $Q(z,1) \simeq - \left( \frac{\epsilon}{g^2} + \frac{1}{4} \right) \frac{1}{z^{2}}$ around $z = 0$ and $Q(z,1) \simeq \frac{1}{g^2} \left( - \frac{1}{4} + \frac{1}{z} \right)$ around $z=\infty$ into eq.~\eqref{eq:Schrodinger_type_Morse} with $\eta=1$, and then by solving it in those limits. One should include the subleading-order term for $z=\infty$ in order to correctly capture the non-suppressed contributions to the asymptotic solutions.
With the choice of $Q_0$ in eq.~\eqref{eq:Morse_Q0}, the leading WKB solutions behave as 
\begin{align}
    \label{eq:sol_WKB_morse_asymptotic}
    \psi^{\text{WKB}}_{\pm} \sim
    \begin{cases}
        \displaystyle
        z^{\frac{1}{2}\pm \frac{i \eta \sqrt{\epsilon}}{g}}~,\qquad & z\to 0~, \vspace{2mm}\\
        \displaystyle
        \ee^{\pm \frac{i \eta}{2 g} z} z^{\mp \frac{i \eta}{g}}~, \qquad & z\to \infty~,
    \end{cases}
\end{align}
up to constant prefactors.
Here, we draw the branch cut originated from $\sqrt{Q_0}$ as is between the two turning points, shown in figure \ref{fig:SL_schematic_Morse}, and take the branch of $\sqrt{Q_0}$ in the asymptotics to be
\begin{align}
    \sqrt{Q_0}\simeq
    \begin{cases}
        \displaystyle
        \frac{i \sqrt{\epsilon}}{gz}~, \qquad & ( z\to 0)~, \vspace{2mm} \\
        \displaystyle
        \frac{i}{g} \left( \frac{1}{2}-\frac{1}{z} \right)~, \qquad & (z\to \infty)~.
    \end{cases}
    \label{eq:branch_Morse}
\end{align}
These asymptotic WKB approximations \eqref{eq:sol_WKB_morse_asymptotic} correctly capture the behaviors of the solutions to the differential equation \eqref{eq:Schrodinger_type_Morse}, obtained by solving it with $Q$ replaced by its asymptotic forms.
By our parametrization, $\eta$ happens to always appear with $1/g$, and one may as well use $g$ as the expansion parameter. Let us, however, emphasize that this is merely a coincidence by our choice, and in general, the placement of $\eta$ needs to be done with care by analyzing appropriate asymptotic behaviors.
Given \eqref{eq:sol_WKB_morse_asymptotic}, the desired boundary conditions, described below eq.~\eqref{eq:sol_morse_asymptote}, require that the solution behave as $\psi_{-}^{\text{WKB}}$ at $z=0$ and $\psi_{+}^{\text{WKB}}$ at $z=\infty$. 

The dominance relations on a Stokes curve that flows into a regular singular point can be determined from the residue of $\sqrt{Q_0(z)}$ at the same singular point.
For, near a regular singular point $z=s_r$, we have
\begin{align}
\label{eq:dominance_Morse}
    \int^z \sqrt{Q_0(z)} \, \dd z \simeq
    \int^z \frac{\Res{z}{s_r} \sqrt{Q_0(z)}}{z-s_r} \, \dd z
    \simeq \Res{z}{s_r} \sqrt{Q_0(z)} \, \log z \; ,
\end{align}
and the dominance relation is defined by the real part of this quantity as in Eqs.~\eqref{eq:dominance_positive}, and \eqref{eq:dominance_negative}.
Since the residue at $z=0$ is $\Res{z}{0}\sqrt{Q_0(z)}=i\sqrt{\epsilon}/g$ with its real part positive and $\log \vert z \vert < 0$, $\psi_-$ is dominant on the Stokes curves flowing into $z=0$.%
\footnote{This is because we take $\sqrt{\epsilon} = \sqrt{\epsilon_0} = 1 - ig /2$ in figure \ref{fig:SL_schematic_Morse}, and thus ${\rm Im}\big( \sqrt{\epsilon} \big) < 0$. In the opposite case of ${\rm Im}\big( \sqrt{\epsilon} \big) > 0$, the dominance relations flip, and one has to conduct the subsequent calculations accordingly.}
Once the dominance relation on these curves is determined, the dominant solution on each of the Stokes curves emanating from $z=z_2$ is fixed automatically, as the dominance relations alternate among the Stokes curves emanating from a single turning point. This is consistent with our choice of branch \eqref{eq:branch_Morse} and of the branch cut in figure \ref{fig:SL_schematic_Morse} which shows the Stokes curves for the Morse potential drawn from $Q_0$ with the eigenvalue $\epsilon=\epsilon_0$. 

\begin{figure}
    \begin{tabular}{cc}   
        \begin{minipage}[b]{0.45\linewidth}
            \centering
            \includegraphics[keepaspectratio, scale=0.3]{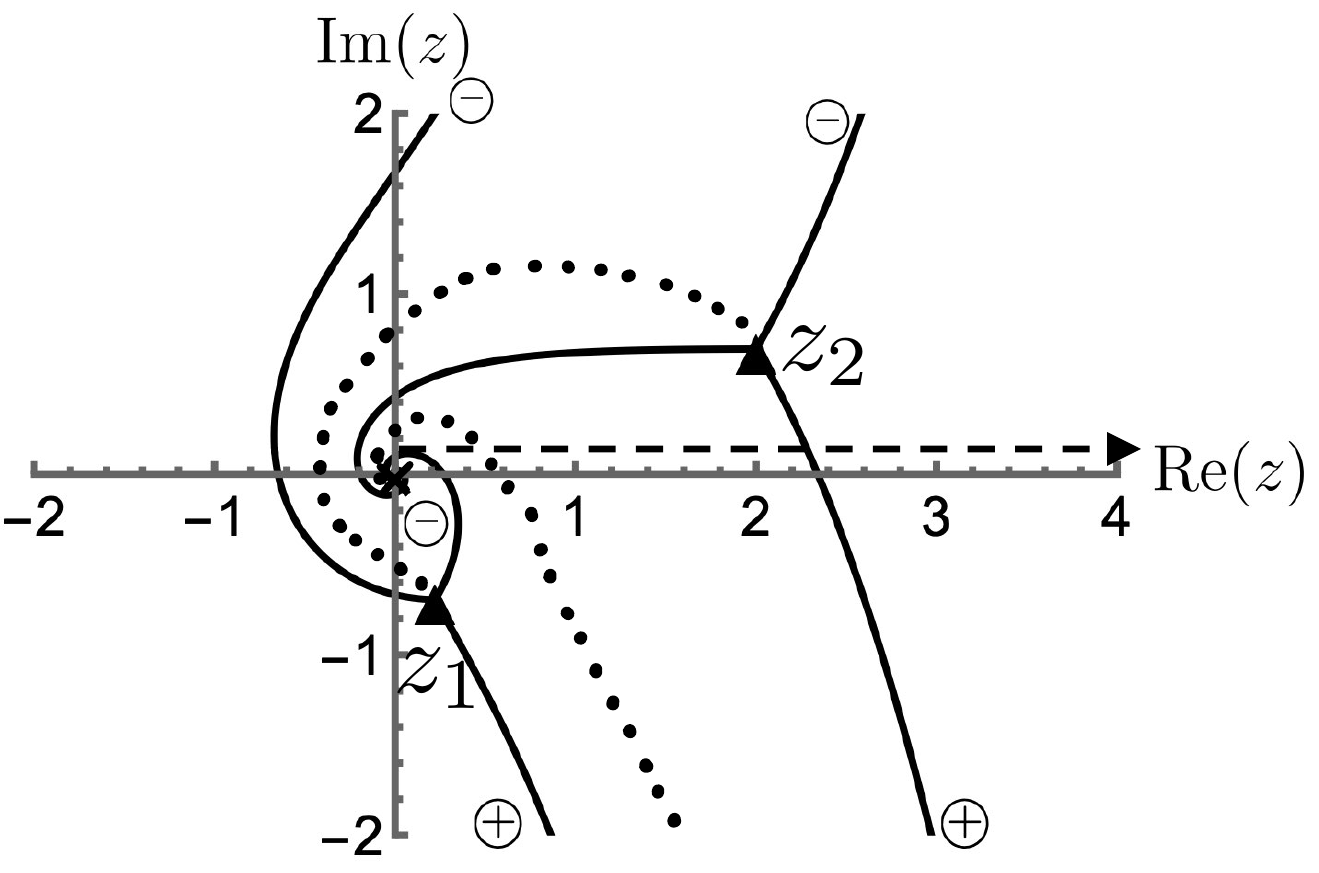}
        \end{minipage}&
        \begin{minipage}[b]{0.45\linewidth}
            \centering
            \includegraphics[keepaspectratio, scale=0.33]{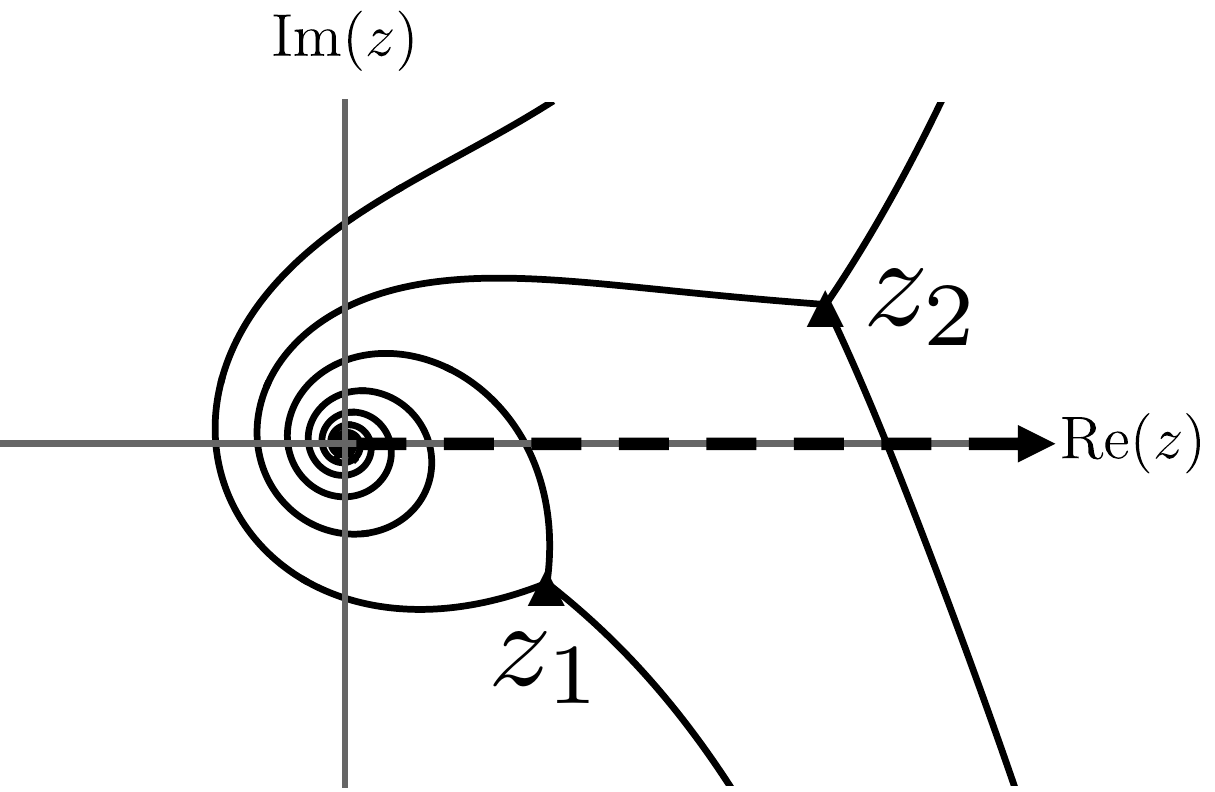}
        \end{minipage}
    \end{tabular}
    \caption{ (Left) The Stokes curves for the potential \eqref{eq:Morse_Q0} with $\epsilon=\epsilon_0=(1-ig/2)^2$ and $g=1$. The triangular points $z_1$ and $z_2$ represent the turning points. The dotted curves denote our choice of branch cuts from a singular point and between turning points. We draw the symbol \textcircled{+} (\textcircled{-}) at the end of each Stokes curve if $\psi_+^{\rm WKB}$ ( $\psi_-^{\rm WKB}$) is dominant on it. The dashed line denotes the path of analytic continuation on the real positive axis.
    (Right)  The schematic picture of the Stokes curves for the Morse potential. We observe the logarithmic spiral flow of the Stokes curves near the regular singularity at $z=0$.
    For a clear illustration, we suppress showing the branch cut between $z=0$ and $z=\infty$, which runs along the logarithmic spiral.}
    \label{fig:SL_schematic_Morse}
\end{figure}
An extra care is needed as we see that there is a logarithmic spiral flow of the Stokes curves near the regular singularity at $z=0$ (see the right panel of figure \ref{fig:SL_schematic_Morse}).
Using \eqref{eq:SL_definition}, \eqref{eq:branch_Morse} and \eqref{eq:dominance_Morse}, and introducing the polar coordinate $z = r\ee^{i\theta}$, we obtain
\begin{align}
    \log r \simeq \frac{\text{Im}\sqrt{\epsilon}}{\text{Re}\sqrt{\epsilon}} \, \theta~.
\end{align}
When approaching $r \to 0$, $\vert \theta \vert $ takes an arbitrarily large value, exhibiting the presence of a logarithmic spiral.
We emphasize that this is a feature of regular singular points and does not appear at irregular singular points.

Let us consider the analytic continuation of the WKB solution $\psi_-^{\rm WKB}$ at $z=0$ into $z=\infty$ along the positive real axis of $z$.
Due to the logarithmic spiral flow of the Stokes curve, the contour encounters an infinite number of intersections with the Stokes curves. That is, we have
\begin{align}
    \begin{pmatrix}
        \psi_+\\
        \psi_-
    \end{pmatrix}^{(0)}
    =
    M
    \begin{pmatrix}
        \psi_+\\
        \psi_-
    \end{pmatrix}^{(\infty)}~,
\end{align}
where
\begin{align}
    M
    &=\lim_{n\to\infty}\left[ \left( V_- \right)^{-1} \Gamma_{21}\left( V_- \right)^{-1} \Gamma_{12} D_0 \right]^n V_+ \cr
    &=\lim_{n\to\infty}
    \begin{pmatrix}
        \displaystyle
        \left( \nu^+_0 \right)^n & \quad
        \displaystyle 
        i \left( \nu^+_0 \right)^n \vspace{2mm} \\
        \displaystyle
        -i\nu_0^+ \, \frac{\left( \nu_0^- \right)^n - \left( \nu_0^+ \right)^n}{\nu_0^- -\nu_0^+} \left( 1+u_{12}^2 \right) & \quad
        \displaystyle
        \left( \nu^-_0 \right)^n + \nu_0^+ \, \frac{\left(\nu_0^- \right)^n - \left( \nu_0^+ \right)^n}{\nu_0^- -\nu_0^+} \left( 1+u_{12}^2 \right)
    \end{pmatrix}~.
\end{align}
Here, $V_-$ is the effect of crossing the Stokes curve on which $\psi_-$ is dominant as in eq.~\eqref{eq:connectionS}, $\Gamma_{12} = \Gamma_{21}^{-1}$ that of changing the reference points of the integral from the turning point $z_1$ to the other $z_2$ as in eq.~\eqref{eq:connectionT}, and $D_0$ that of crossing the branch cut attached the regular singular point at $x=0$ as in \eqref{eq:MatrixD}, while $\nu^{\pm}_0$ and $u_{12}$ are given by, according to eqs.~\eqref{eq:crossing_branchcut} and \eqref{eq:uij},
\begin{align}
    \nu^{\pm}_0 &=\exp\left[ i\pi \left( 1 \pm 2 \, \Res{z}{0} S_{\text{odd}} \right) \right]~,\\
    u_{12}^2 &= \exp\left( \oint_{\gamma_{12}} S_{\text{odd}} \, \dd z \right) = \exp\left( -2\pi i\Res{z}{0,\infty}S_{\text{odd}} \right) 
    \label{eq:u12Morse}~.
\end{align}
The integration contour $\gamma_{12}$ is defined in figure \ref{fig:gamma_ij}.
In the second equality of eq.~\eqref{eq:u12Morse}, the closed contour $\gamma_{12}$ encircling the branch cut between the two turning points counterclockwise is identified with the one encircling the outside clockwise, containing the singular points at $z=0$ and $+\infty$. 
Since we have (see appendix \ref{app:regular_residue} for the derivation)%
\footnote{From the explicit expressions of each term in $S_{\rm odd}$, one can see that $S_{-1}$ is the only term that contributes to the residue both at $z=0$ and $z=\infty$.}
\begin{align}
    \Res{z}{0}\,S_{\text{odd}} = \frac{i \eta \sqrt{\epsilon}}{g}~, \qquad
    \Res{z}{\infty}\,S_{\text{odd}}=\frac{i \eta}{g}~,
\end{align}
we obtain
\begin{align}
    &\nu^{\pm}_0 =\exp \left(i\pi\mp2\pi \, \frac{\eta \sqrt{\epsilon}}{g} \right)~,\\
    &u_{12}^2
    =\exp\left[ \frac{2\pi \eta}{g} \left( \sqrt{\epsilon}+1 \right) \right]~.
\end{align}

The solution satisfies the ``left-'' and ``right-moving'' boundary conditions, which are to take $\psi_-$ at $z=0$ and $\psi_+$ at $z = \infty$, if $M_{22}=0$, i.e.,
\begin{align}\label{eq:MorseQNMcondition}
    \lim_{n\to\infty} \left[ \left( \nu_0^- \right)^n \left( 1+\nu_0^+ \, \frac{1-\left( \frac{\nu_0^+}{\nu_0^-} \right)^n}{\nu_0^- - \nu_0^+}\left( 1+u_{12}^2 \right) \right) \right] = 0~.
\end{align}
The overall factor $\left( \nu_0^- \right)^n$ is degenerate with the constant coefficient of $\psi_-$ and can be absorbed into an integration constant.
Since $\text{Re}\big( \nu_0^+/\nu_0^- \big)<1$, we can neglect $(\nu_0^+/\nu_0^-)^n$ in the limit $n\to\infty$, and the condition eq.~\eqref{eq:MorseQNMcondition} is reduced to
\begin{align}
    1+\frac{\nu_0^+}{\nu_0^- - \nu_0^+}(1+u_{12}^2)=0~.
\end{align}
Rewriting this equation, we obtain
\begin{align}
    \exp\left[ \frac{2\pi \eta}{g} \left( \sqrt{\epsilon}-1 \right) \right] + 1 = 0~,
\end{align}
and finally, we get
\begin{align}
    \frac{2\pi \eta}{g} \left( \sqrt{\epsilon}-1 \right) = -2\pi i \left( n+\frac{1}{2} \right) \,,
\end{align}
where the sign of the right hand side is decided from the condition $\text{Im}\sqrt{\epsilon}<0$.
This equation gives the correct eigenvalues \eqref{eq:Eigenvalue_Morse} after setting $\eta = 1$.
In the case of ${\rm Re} \sqrt{\epsilon} < 0$, the same calculation reproduces \eqref{eq:Eigenvalue_Morse_negative} as well.

This example of the Morse potential shows a nontrivial demonstration of the usage of the exact WKB technique.
Moreover, we would like to remark on an important lesson that we must take into account the logarithmic spiral of Stokes curves when we consider the analytic continuation of the WKB solution from $z=0$ to $z=\infty$ on the positive real axis.
In general, logarithmic spirals appear around regular singularities of a given differential equation, and this phenomenon indeed occurs in the case of black hole spacetimes as we investigate in the next section.

\section{QNMs of Schwarzschild BH from exact WKB analysis}
\label{sec:Schwarzschild}
In this section, we apply the exact WKB analysis to QNMs problem in the Schwarzschild black hole spacetime, as our main goal of this paper.
The corresponding differential equation contains regular singular points, and hence the logarithmic spirals appear as in the case of the Morse potential.
We properly take them into account and impose the boundary conditions both at the event horizon and at the spatial infinity.
We demonstrate our formulation by computing the QNMs of perturbations around the Schwarzschild background metric fully analytically and, as a concrete result, reproduce the eigenvalues of higher overtones
\begin{equation}
\omega_n \to \frac{\log 3}{4 \pi} - \frac{i}{2} \left( n+\frac{1}{2} \right) \; ,
\end{equation}
in the limit $n \to \infty$. 
We take the normalization of $r_S = 2M = 1$ and $c = G_N = \hbar = 1$ throughout the section, where $M$ is the mass of the black hole.
The real part $\log 3 / (4\pi)$ is a highly nontrivial result, which was first conjectured in \cite{Hod:1998vk} based on \cite{Nollert:1993zz} and later confirmed by \cite{Motl:2002hd, Motl:2003cd,Andersson:2003fh}. We explicitly show that this result is obtained after imposing both of the boundary conditions and incorporating an infinite number of Stokes phenomena due to the logarithmic spirals. These mathematical objects indeed are of physical importance.

\subsection{Regge-Wheeler type equation}

We consider perturbing the Schwarzschild spacetime. Expanding perturbation variables in terms of the spherical harmonics and of a temporal component as $\sum {\rm e}^{-i \omega t} Y_{lm}(\theta,\phi) \, \Psi_l(r)$ up to appropriate powers of the radial coordinate $r$ (rescaled by $2 M$), the linearized radial equations for a massless scalar ($s=0$), electromagnetic field ($s=1$), and gravitational wave ($s=2$) are concisely written in a universal form of the Regge-Wheeler type
\begin{align}
	\bigg[f^2\frac{\dd^{2}}{\dd r^{2}} + ff' \frac{\dd}{\dd r} + \omega^{2} - V_{s}\bigg]\Psi_l = 0 \; ,
 \label{eq:Radial}
\end{align}
where prime denotes derivative with respect to $r$ and
\begin{align}
\label{eq:V_Schwarzschild}
    V_{s}=f\bigg[\frac{l(l+1)}{r^{2}} + \frac{1-s^2}{r^{3}}\bigg] \; , \qquad
    f(r)=1-\frac{1}{r} \; .
\end{align}
For gravitational waves, there exist two classes of modes, the Regge-Wheeler (odd/vector) \cite{Regge:1957td} and Zerilli (even/scalar) \cite{Zerilli:1970se,Zerilli:1970wzz}. However, it is known that the gravitational potentials of these two types are (super)partner potentials \cite{Cooper:1994eh}, and the corresponding radial solutions can be transformed into each other with the same QNM spectrum (isospectrality) \cite{Chandrasekhar1975,Chandrasekhar1984}. Except for so-called algebraically special modes, for which the transformation is singular, therefore, it suffices to analyze the Regge-Wheeler type equation \eqref{eq:Radial} in order to exhaust all the QNMs for $s=0,1,2$ (see \cite{Cardoso:2019mqo,Hatsuda:2021gtn} for reviews).
In fact, the analytic structure of eq.~\eqref{eq:Radial} changes for $s=0$ in the context of the exact WKB analysis, where the number of turning points reduces to three and the Schwarzschild singularity at $r=0$ becomes a simple pole rather than a higher-order singular point. As a result, the Stokes geometry drastically changes, and an independent analysis has to be conducted. 
We make a brief comment on this case at the end of this section.

To remove the first-order derivative term in eq.~\eqref{eq:Radial}, we change variables by $\psi = f^{1/2} \Psi_l$, which is the same as the procedure from eq.~\eqref{eq:general_secondorder_eq} to eq.~\eqref{eq:Schrodinger_type_eq}, and obtain a Schr\"{o}dinger-type equation as
\begin{align}
	\left( \frac{\dd^{2}}{\dd r^{2}} + \frac{\omega^{2}-V_{s}}{f^{2}} + \frac{f'{}^{2}}{4f^{2}} - \frac{f''}{2f} \right) \psi = 0 \; .
    \label{eq:mastereq_schwarzschild}
\end{align}
In order to construct the formal WKB series solution to eq.~\eqref{eq:mastereq_schwarzschild}, we introduce the WKB expansion parameter $\eta$ as follows
\begin{align}
	\left[ -\frac{\dd^{2}}{\dd r^{2}} + \eta^{2} Q(r,\eta) \right] \psi = 0 \; ,
    \label{eq:SchType}
\end{align}
where
\begin{subequations}
\label{eq:Q_expansion_schwarzschild}
\begin{align}
    Q(r,\eta) & = Q_{0}(r) + \eta^{-2}Q_{2}(r) \; ,
    \label{eq:Q_expand}\\
    Q_{0}(r) & = \frac{-\omega^{2}r^{4}+ l \left( l + 1 \right) r \left( r - 1 \right) - \left( r-1 \right) s^2}{r^{2}(r-1)^{2}} \; ,
    \label{eq:Q0}\\
    Q_{2}(r) & = \frac{-1}{4 \, r^{2} \left( r-1 \right)^{2}} \; ,
    \label{eq:Q_Schwarz}
\end{align}
\end{subequations}
and the formal series is constructed by expanding with respect to ``large'' $\eta$.
The original equation \eqref{eq:mastereq_schwarzschild} is recovered by taking $\eta = 1$.
This choice of $\eta$ insertion is justified from the viewpoint of the asymptotic behaviors, i.e., the asymptotic expansions of the leading-order WKB solutions match with those of the actual solutions to the differential equation \eqref{eq:mastereq_schwarzschild}.
The former is found by expanding $\psi^{\rm WKB}_\pm \sim Q_0^{-1/4} \exp \left( \pm \int^r \sqrt{Q_0} \, \dd r \right)$ around the singular points $r = 0,1,$ and $\infty$ (with $\eta=1$ taken), while the latter is by inserting 
\begin{align}
    \frac{\omega^{2}-V_{s}}{f^{2}} + \frac{f'{}^{2}}{4f^{2}} - \frac{f''}{2f} \simeq
    \begin{cases}
        \displaystyle
        - \frac{s^2 - 1/4}{r^2} \; , \qquad
        & r \to 0 \; , \vspace{2mm} \\
        \displaystyle
        \frac{\omega^2 + 1/4}{\left( r-1 \right)^2} \; , \qquad 
        & r \to 1 \; , \vspace{2mm} \\
        \displaystyle
        \left( 1 + \frac{2}{r} \right) \omega^2 \; , \qquad
        & r \to \infty \; ,
    \end{cases}
\end{align}
into eq.~\eqref{eq:mastereq_schwarzschild} and by solving it in the corresponding limits.%
\footnote{As in the case of the Morse potential in section~\ref{subsec:Morse_potential}, the subleading order should be included in the limit $r \to \infty$ due to the fact that it can make a diverging contribution to the solution in this limit, as can be seen in the last expression of eq.~\eqref{eq:psiWKB_limits}.}
\footnote{Since we impose boundary conditions only at the horizon ($r=1$) and at the spatial infinity ($r=\infty$), it is not strictly necessary for the two asymptotic solutions to also have the same behavior at $r=0$ for our purpose. Nonetheless, our choice of $Q_0$ and $Q_2$ in eq.~\eqref{eq:Q_expansion_schwarzschild} makes it possible for them to coincide around all the singular points $r=0,1,$ and $\infty$.}
The two general WKB solutions \eqref{eq:WKB_solutions} then read
\begin{align}
\label{eq:psiWKB_limits}
    \psi^{\text{WKB}}_{\pm} \sim
    \begin{cases}
        \displaystyle
        r^{1/2\pm s} \; , \qquad 
        & r\to 0 \; , \vspace{1mm}\\
        \displaystyle
        (r-1)^{1/2 \pm i\omega} \; , \qquad 
        & r\to 1 \; , \vspace{1mm}\\
        \displaystyle
        \ee^{\pm i\omega r} r^{\pm i\omega} \; , \qquad
        & r\to \infty \; ,
    \end{cases}
\end{align}
where we take the branch of $\sqrt{Q_0}$ as 
\begin{align}
    \sqrt{Q_0}\simeq
    \begin{cases}
        \displaystyle
        \frac{s}{r} \; , \qquad
        & r\to 0 \; , \vspace{2mm}\\
        \displaystyle
        \frac{i\omega}{r-1} \; , \qquad
        & r\to 1 \; , \vspace{2mm}\\
        \displaystyle
        \left( 1 + \frac{1}{r} \right) i\omega \; , \qquad
        & r\to \infty \; .
    \end{cases}
    \label{eq:branch_Schwarzchild}
\end{align}
To identify the QNM frequency $\omega$, we need to impose proper boundary conditions. In the community of black hole perturbations, it is customary to take them as incoming at the horizon $(r=1)$ and outgoing at infinity ($r=\infty$), which we follow in the following analysis.
In terms of WKB solutions, these conditions translate to taking $\psi_{-}^{\text{WKB}}$ in the Stokes region that contains $r=1$ and $\psi_{+}^{\text{WKB}}$ in the region with $r=\infty$, according to eqs.~\eqref{eq:psiWKB_limits}.

\subsection{Computing QNMs -- analytic continuation in the presence of logarithmic spirals}
\label{subsec:StokesLine_GR}
\begin{figure}[t]
    \centering
    \includegraphics[keepaspectratio, scale=0.7]{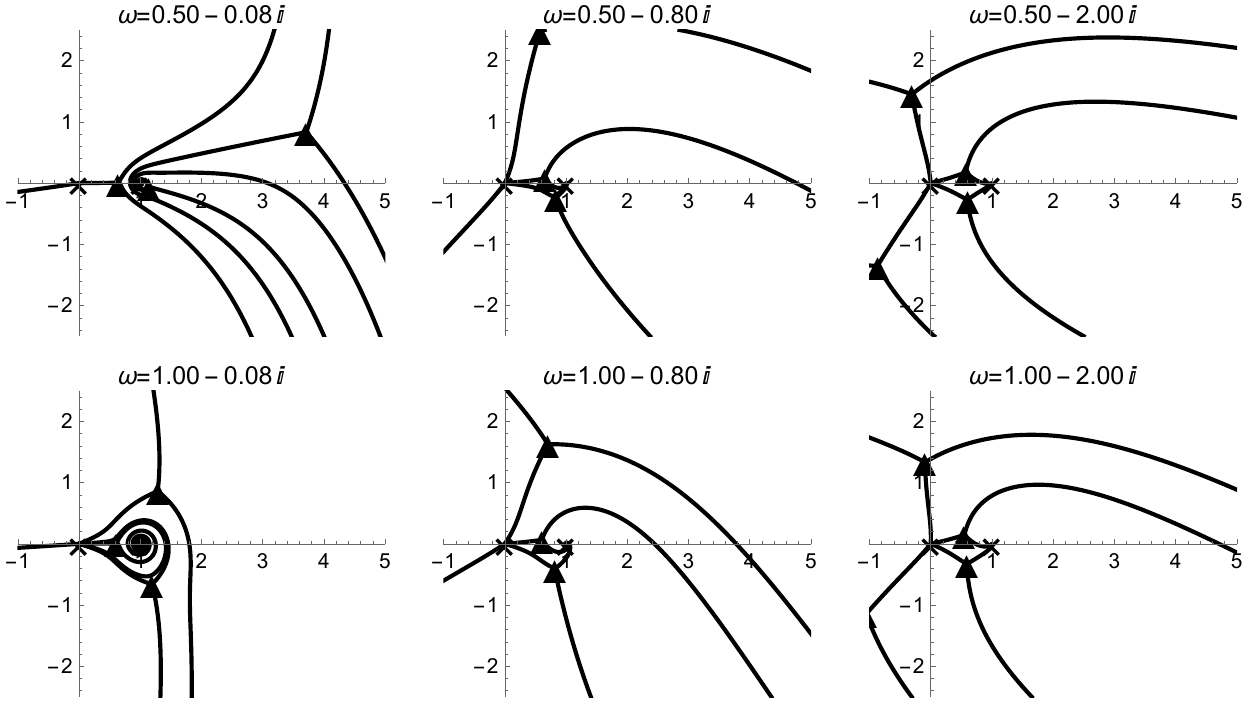}
    \caption{
    The Stokes curves for $s=l=2$ with various values of $\omega$. 
    Since the potential height is $\sim 0.6$ (in the units of $2 M$), we take two different values for ${\rm Re}(\omega)$, one smaller and one larger than 0.6, and for each of them, take three different values of ${\rm Im}(\omega)$, one small, one large, and one similar compared to the value of ${\rm Re}(\omega)$. The black crosses and solid triangles correspond to the singular points and turning points, respectively.
    Notice the presence of a logarithmic spiral around $r=1$ (and one around $r=0$). All of the $4$ turning points approach the origin as $\vert \omega \vert$ increases.
    }
    \label{fig:StokesLines_qnm}
\end{figure}

We now aim to find the QNMs of the perturbations around Schwarzschild black holes analytically using the exact WKB technique. To this end, we focus on higher overtones, which are characterized by negatively large ${\rm Im} \,(\omega)$, where analytical methods are most useful.
We promote the radial variable $r$ to a complex number, and essentially all the turning points, where we have $Q_0 = 0$, are located on the complex plane for the parameter values of our interest.
The Stokes curves for some representative values of QNM frequency $\omega$ are depicted in figure~\ref{fig:StokesLines_qnm}. These curves are drawn with respect to $Q_0$ under the criterion \eqref{eq:SL_definition}. We find that the topology of the {\it Stokes geometry} is insensitive to the value of $\omega$ as long as its imaginary part is large (compared to $s, \, l$ and $\vert {\rm Re} \, (\omega) \vert$) in the negative direction. The larger it is, the closer all the turning points approach to the origin $r=0$.

\begin{figure}[t]
    \begin{tabular}{cc}   
        \begin{minipage}[b]{0.45\linewidth}
            \centering
            \includegraphics[keepaspectratio, scale=0.3]{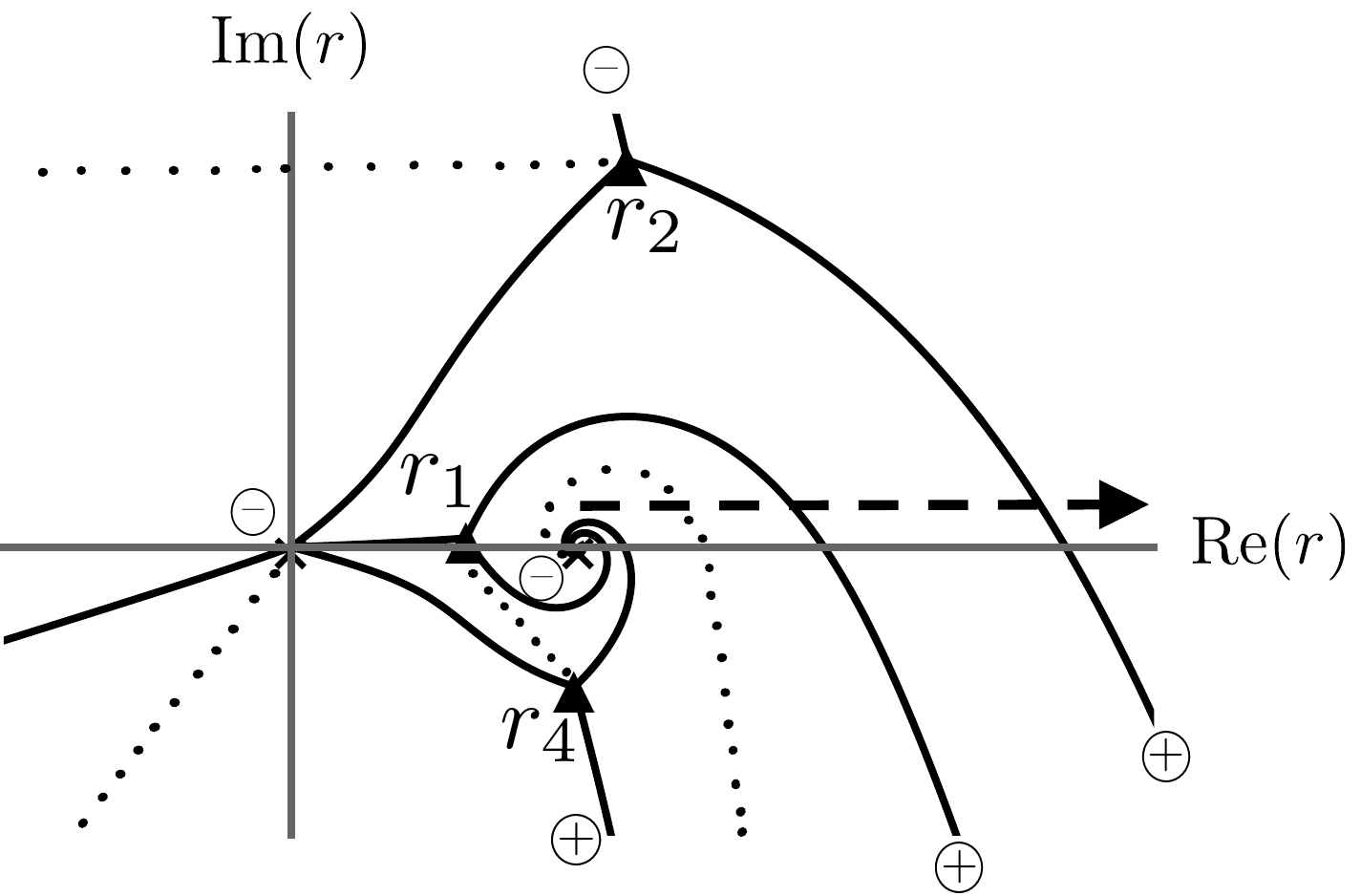}
        \end{minipage}&
        \begin{minipage}[b]{0.45\linewidth}
                \centering
                \includegraphics[keepaspectratio, scale=0.3]{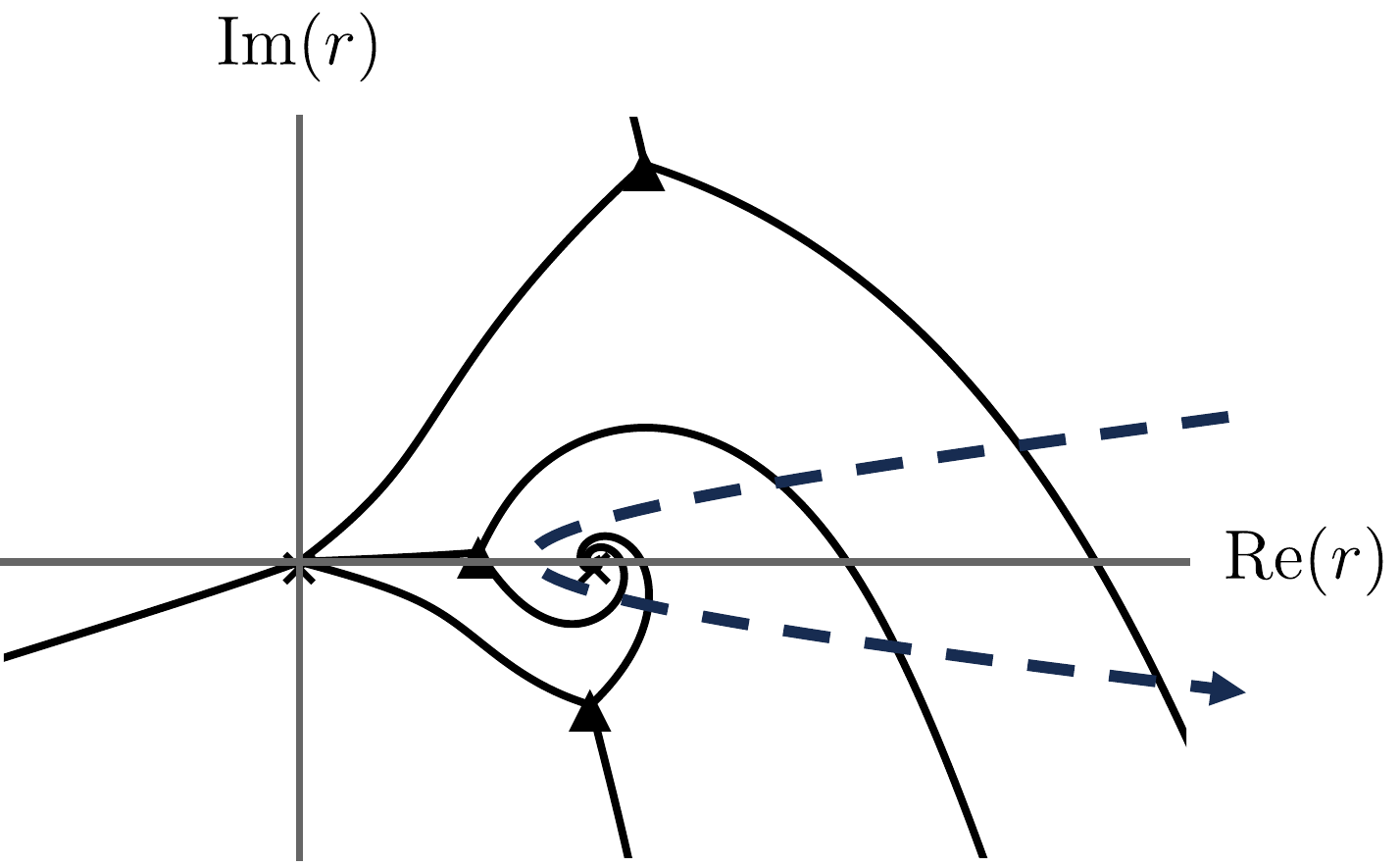}
        \end{minipage}
    \end{tabular}
    \caption{
    Schematic figures of Stokes curves for the potential \eqref{eq:Q0} with the branch cuts and the paths of analytic continuation. For illustration, we take $l=s=2$ and $\omega = 1 - 0.4 i$ (which is unrealistic but suitable to show the geometric features).
    (Left) The points $r_i\,(i=1,2,3,4)$ represent the turning points, and the dotted curves denote our choice of branch cuts between the turning points as well as those attached to the singular points. We place the symbol \textcircled{+} (\textcircled{-}) at the end of each Stokes curve on which $\Psi_+^{\rm WKB}$ ($\Psi_-^{\rm WKB}$) is dominant. The dashed line denotes the path of analytic continuation on the real positive axis.
    (Right) The same Stokes geometry with the path typically taken in the monodromy method \cite{Andersson:2003fh}.}
    \label{fig:SL_schematic_Schwarzschild}
\end{figure}

While almost invisible in the rightmost plots in figure \ref{fig:StokesLines_qnm}, there exists the logarithmic spiral flow of Stokes curves into the horizon ($r=1$) as well as into the curvature singularity ($r=0$), since these points are regular singular points of the differential equation \eqref{eq:SchType}. This feature is also present in the case of the Morse potential in section \ref{subsec:Morse_potential}. The exaggerated illustration of the logarithmic spiral into the horizon is shown in the left panel of figure~\ref{fig:SL_schematic_Schwarzschild}.

As discussed around eq.~\eqref{eq:dominance_Morse} for the Morse potential, we can infer the dominance relation of the Stokes curves from the residue of $\sqrt{Q_0(r)} \,$ at the regular singular points.
Since $\Res{r}{0}\sqrt{Q_0(r)}=s > 0$, $\psi_-$ is dominant on the Stokes curves flowing into $r=0$.
Similarly, we have $\Res{r}{1}\sqrt{Q_0(r)}=i\omega$, whose real part is positive, and hence $\psi_-$ is dominant on the Stokes curves flowing into $r=1$.
Then, once we choose how to draw the branch cuts, all the other dominance relations are automatically fixed since dominance relations, in general, alternate between adjacent Stokes curves.%
\footnote{The Stokes curve ``adjacent'' to the one across a branch cut lies on the next Riemann sheet.}
As shown in the left panel of figure \ref{fig:SL_schematic_Schwarzschild}, we draw the cuts originating from the branch of $\sqrt{Q_0}$ between $r_1$ and $r_4$, and between $r_2$ and $r_3$ (out of the figure),%
\footnote{The naming of the turning points, $r_{1,2,3,4}$, is arbitrary. We call the one in the ${\rm Re}\,(r) > 0 , \, {\rm Im} \,(r)>0$ region by $r_1$ and go around the origin counterclockwise to name the others in the increasing order, in the limit $\vert {\rm Im}\,(\omega) \vert \gg 1$.}
while those from the branch of $Q_0^{-1/4}$ (or $1/\sqrt{S_{\rm odd}}$) in eq.~\eqref{eq:WKB_solutions} between $r=0$ and $r=\infty$, and between $r=1$ and $r=\infty$, additional to the previous two.
With these choices of branches and cuts, all the dominance relations are determined as in the left panel of figure \ref{fig:SL_schematic_Schwarzschild}, denoted by \circled{$+$} and \circled{$-$}.

In order to impose the appropriate boundary conditions at the horizon ($r=1$) and the spatial infinity ($r=\infty$), we connect the exact WKB solutions constructed in the corresponding regions by analytic continuation. 
The resultant effect can be summarized by the pairs of the exact WKB solutions $\psi_\pm^{(1)}$ and $\psi_\pm^{(\infty)}$ in the two regions related by a matrix $M$, 
\begin{align}
    \begin{pmatrix}
        \psi_+\\
        \psi_-
    \end{pmatrix}^{(1)}
    =
    M
    \begin{pmatrix}
        \psi_+\\
        \psi_-
    \end{pmatrix}^{(\infty)}\, .
\end{align}
Due to the presence of the logarithmic spirals, the path that connects the two asymptotic points crosses the Stokes curves infinitely many times. Taking this phenomenon into account is essential to obtain the correct values of QNMs, and the Borel summability ensures a finite result even with the infinite crossings. 
These crossings can be decomposed into an infinite repetition of five effects. As observed in the left panel of figure \ref{fig:SL_schematic_Schwarzschild}, each set consists of a crossing of the Stokes curve emanating from one of the turning points (say $r_1$) into the regular singular point $r=1$, changing the integral bounds from $r_1$ to another turning point ($r_4$) as in eq.~\eqref{eq:connectionT}, a crossing of the Stokes curve emanating from $r_4$ into $r=1$, changing the integral bounds from $r_4$ back to $r_1$, and a crossing of the branch cut connecting $r=1$ and $r=\infty$, given by eq.~\eqref{eq:crossing_branchcut}.
After repeating these processes infinitely many times, the path crosses two Stokes curves that are attached to two different turning points ($r_1$ and $r_2$ in figure \ref{fig:SL_schematic_Schwarzschild}). Combining all these effects, we obtain the contour matrix $M$ as
\begin{align}
          M & = \lim_{n\to\infty} \left[ \left( V_- \right)^{-1} \Gamma_{14} \left( V_- \right)^{-1} \Gamma_{41} \, D_1 \right]^n V_+ \Gamma_{12} V_+ \Gamma_{21} \\
          & =
          \lim_{n\to\infty} 
          \begin{pmatrix}
            \displaystyle
            \left( \nu_1^+ \right)^n 
            \qquad
            & \displaystyle
            0 \vspace{2mm} \\
            \displaystyle
            -i \nu_1^+ \, \frac{\left( \nu_1^- \right)^n - \left( \nu_1^+ \right)^n}{\nu_1^- - \nu_1^+}(1 + u_{41}^2) \qquad
            & \left( \nu_1^- \right)^n
          \end{pmatrix}
          \begin{pmatrix}
            1 \quad
            & 
            \displaystyle
            i \left( 1 + u_{12}^2 \right) \vspace{2mm} \\
            0 \quad & 1
          \end{pmatrix}
          \\
        &=\lim_{n\to\infty}
          \begin{pmatrix}
            \displaystyle
            \left( \nu^+_1 \right)^n
            & \displaystyle
              i \left( \nu^+_1 \right)^n \left( 1 + u_{12}^2 \right) \vspace{2mm} \\
            \displaystyle
            -i \nu_1^+ \, \frac{\left( \nu_1^- \right)^n - \left( \nu_1^+ \right)^n}{\nu_1^- - \nu_1^+}(1 + u_{41}^2) \quad
            & \displaystyle
              \left( \nu^-_1 \right)^n + \nu_1^+ \, \frac{\left( \nu_1^- \right)^n - \left( \nu_1^+ \right)^n}{\nu_1^- - \nu_1^+} \left( 1 + u_{41}^2 \right) \left( 1 + u_{12}^2 \right)
          \end{pmatrix}\,.
          \label{eq:contourmatrix_Schwarzschild}
\end{align}
Since the set of effects $(V_-)^{-1} \Gamma_{14} (V_-)^{-1} \Gamma_{41} D_1$ repeats infinite times before the subsequent effects $V_+ \Gamma_{12} V_+ \Gamma_{21}$, the starting point of the contour should not matter. That is, matrix multiplication of any subset of $(V_-)^{-1} \Gamma_{14} (V_-)^{-1} \Gamma_{41} D_1$ from the left to \eqref{eq:contourmatrix_Schwarzschild} should be equally valid. Indeed, one can easily show that such a multiplication would change the exact form of $M$ but leaves the condition for QNMs the same.

To satisfy the boundary conditions, namely the incoming solution $\psi_-$ at $r=1$ and the outgoing one $\psi_+$ at $r=\infty$, we must demand $(M)_{22}=0$, i.e.,
\begin{align}
\lim_{n\to\infty}
    (\nu^-_1)^n \left[ 1+ \nu_1^+ \, \frac{1 - \left( \frac{\nu_1^+}{\nu_1^-} \right)^n}{\nu_1^- - \nu_1^+} \left( 1+u_{41}^2 \right) \left( 1+u_{12}^2 \right) \right]=0\,.
    \label{eq:M22_Schwarzschild}
\end{align}
Note that the overall factor $\left( \nu_1^- \right)^n$ does not affect the QNM condition \eqref{eq:M22_Schwarzschild} because $\left( \nu_1^- \right)^n\neq0$. When one is interested in quantities related to the normalization of mode functions, a more detailed treatment of the divergence is necessary, which is beyond the scope of this paper.
Since $\Res{r}{1}\,S_{\text{odd}}=i \eta \omega$ (see appendix~\ref{app:regular_residue}), we obtain
\begin{align}
\label{eq:coeff_cutcrossing}
    \nu^{\pm}_1=\exp\left[ i\pi \left( 1\pm2\Res{r}{1}S_{\text{odd}} \right) \right] = \exp\left( i\pi \mp 2\pi \eta \omega \right)\,.
\end{align}
Under our implicit condition ${\rm Re} \left( \omega \right) > 0$, we have $\mathrm{Re} \left( \nu_1^+/\nu_1^- \right) < 1$ and consequently $\big( \nu_1^+ / \nu_1^- \big)^n \to 0$ in the limit $n\to\infty$.
Then the condition \eqref{eq:M22_Schwarzschild} reduces to
\begin{align}
\label{eq:QNMcondition_Schwarzschild}
    \ee^{4\pi\omega} + u_{12}^2+u_{41}^2+u_{12}^2u_{41}^2=0\,.
\end{align}
We emphasize that the result is exact up to this point and does not rely on any approximation. Unless the Stokes geometry qualitatively changes in relation to the contour we consider, this is the full result.%
\footnote{This does not mean that eq.~\eqref{eq:QNMcondition_Schwarzschild} applies to any value of $\omega$, since a change of the parameter beyond a certain level may lead to a qualitative change of the Stokes geometry. The classification of the Stokes geometry in the entire parameter space is beyond the scope of our present study.}

We now make an approximation that is valid for higher overtones.
In the case of $\mathrm{Im} \left( \omega \right) < 0$ and $\vert \text{Im} ( \omega ) \vert \gg 1$, we find $u_{12}^2\simeq \ee^{i\pi 
\eta s},u_{41}^2\simeq \ee^{-i\pi \eta s}$ (see appendix \ref{app:phase_int}) and obtain
\begin{align}
    \omega \simeq \frac{\log {(1+2\cos{\pi s})}}{4\pi}-\frac{i}{2}\bigg(n+\frac{1}{2}\bigg) \; , 
    \label{eq:SchBH_QNM}
\end{align}
where $n$ a large positive integer and we have set $\eta = 1$.
Note that this formula implies that, for spin-$1$ perturbations $s=1$, $\omega$ is purely imaginary, and the higher overtones do not oscillate. One may argue that, if $\mathrm{Re} \, (\omega) = 0$, some of the premises in our calculation, e.g.~the hierarchy between $\nu_1^\pm$, would not hold, which is true, and lose the validity of the result. In fact, this is a special case in terms of the Stokes geometry where the logarithmic spirals appear to disappear. One way to handle this issue is to interpret the $s=1$ case as a limit of the (continuous) parameter in the vicinity of $s=1$. Provided that the Stokes geometry is non-singular around $s=1$, this gives the correct result, that is eq.~\eqref{eq:SchBH_QNM}.

For higher overtones with $\text{Re}\,(\omega)<0$, we can calculate the asymptotic QNMs in the same manner. Under the choice of branch as in \eqref{eq:branch_Schwarzchild} and branch cuts similar to the left panel of figure \ref{fig:SL_schematic_Schwarzschild}, the only changes we need to take into account as compared with the case ${\rm Re} \, (\omega)$ are: (i) the Stokes geometry of $\text{Re}~(\omega) < 0$ is the upside-down of that of $\text{Re}~(\omega)>0$ in the limit $\vert \textrm{Im} \, (\omega) \vert \to \infty$, (ii) the directions to cross Stokes curve along the contour are all flipped, and (iii) we now have $\vert \nu_1^- / \nu_1^+ \vert < 1$ and thus $\left( \nu_1^- / \nu_1^+ \right)^n \to 0$, see the expression \eqref{eq:coeff_cutcrossing}. As a result, the QNM condition is slightly different from \eqref{eq:SchBH_QNM}, and we obtain
\begin{align}
\label{eq:QNMcondition_Sch_negative}
    \ee^{-4\pi\omega} + u_{12}^2 + u_{41}^2 + u_{12}^2 u_{41}^2 = 0 \; ,
\end{align}
where the turning points are numbered in the same manner as in the left panel of figure \ref{fig:SL_schematic_Schwarzschild}.
This condition indeed gives the value of $\omega$ with the correct real part, that is, in the limit $\mathrm{Im} \, (\omega) \to - \infty$,
\begin{align}
\label{eq:SchBH_QNM_negativefreq}
    \omega \simeq - \frac{\log \left( 1 + 2 \cos \pi s \right)}{4\pi} - \frac{i}{2} \left( n + \frac{1}{2} \right) \; ,
\end{align}
where $n$ is again a large positive integer and $\eta =1$ is taken.
eqs.~\eqref{eq:SchBH_QNM} and \eqref{eq:SchBH_QNM_negativefreq} are the main result of this paper.

Substituting $s=2$, eqs.~(\ref{eq:SchBH_QNM}) and \eqref{eq:SchBH_QNM_negativefreq} match with the numerical calculations \cite{Nollert:1993zz,Hod:1998vk} and analytic calculations \cite{Motl:2002hd,Motl:2003cd,Andersson:2003fh}.
Let us emphasize that, if we did not properly include the effects of the spiral flows of the Stokes curve at the horizon, the real part of asymptotic QNMs would become a wrong value (e.g., by taking $n=1$ in \eqref{eq:M22_Schwarzschild}, the real part becomes $\log 4 /(4\pi)$ with $s=2$).

Let us comment on the case of spin-$0$ perturbations. 
As mentioned below eq.~\eqref{eq:V_Schwarzschild}, the Stokes geometry is qualitatively different when $s=0$. In particular, as can be observed from the explicit expression of $Q_0$ in eq.~\eqref{eq:Q0}, $r=0$ is no longer a regular singular point but a simple pole, leaving $r=1$ as the only regular singular point, while the number of the turning points reduces to $3$. This change can be understood as the limit of first taking $\mathrm{Im} \, (\omega) \to -\infty$ and then sending $s \to 0$. Under this interpretation, the two of the turning points, called $r_1$ and $r_4$ in the left panel of figure \ref{fig:SL_schematic_Schwarzschild}, merge and form a single turning point on the real axis. Then keeping in mind that the Stokes curve emanating from this turning point and spiraling into the regular singular point $r=1$ is two-fold, essentially the same calculation as above follows, resulting in the correct result of the form \eqref{eq:SchBH_QNM} and \eqref{eq:SchBH_QNM_negativefreq} with $s=0$, which can be verified in the literature \cite{Motl:2002hd}.%
\footnote{As a matter of fact, this derivation for the $s=0$ case is not mathematically satisfactory. This case may as well be obtained as the limit of taking $s \to 0$ with $\omega$ fixed. However, in this case, one of the turning points and a regular singular point, instead of two turning points, merge, making $r=0$ a simple pole. Starting from this Stokes geometry, the condition for QNMs appears to be different. This means that the order of taking limits would matter, and a clear criterion for the correct order seems lacking a priori. }

In our derivation, we notice that the singular points of the differential equation play an important role in determining the higher overtone frequencies. More specifically, the residue of $S_{\rm odd}$ at $r=1$ is how $4\pi\omega$ enters in eqs.~\eqref{eq:QNMcondition_Schwarzschild} and \eqref{eq:QNMcondition_Sch_negative}. The origin of this term is $\nu_1^- / \nu_1^+$, whereas $\nu_1^\pm$ represents the effects of the crossing of branch cuts that, in turn, pick up the residue at the event horizon.
The other terms containing $u_{12}$ and $u_{41}$ in general depend on $\omega$ as well; however, as $\vert {\rm Im} \, (\omega) \vert$ becomes larger, all the four turning points get closer to the origin. Moreover, assuming $\vert {\rm Re} \, (\omega) \vert \ll \vert {\rm Im} \, (\omega) \vert$, they are equally spaced in angle by $90^\circ$ around $r=0$, which can be understood by the fact that the turning points can be found approximately by solving $\omega^2 r^4 - s^2 \simeq 0$, giving $r \simeq {\rm e}^{i \pi \left( 2n + 1 \right)/4} \left( - s^2 / \omega^2 \right)^{1/4}$ with $n=0,1,2,3$ and approximately $\omega^2 < 0$ \cite{Andersson:2003fh}. In this case, the two closed contour integrals $u_{12}$ and $u_{14}$, which encircle branch cuts between turning points, give roughly equal contributions, each of which is half of the sum of the residues on the complex $r$ plane. The contributions from the residue at $r=0$ and that at $r=\infty$ cancel out each other, making $u_{12}$ and $u_{14}$ given only by the residue at $r=0$ and independent of $\omega$. This nontrivial dependence on the singularity not only at the event horizon but also at the curvature singularity (and the spatial infinity) is due to the analytic structure of the Schwarzschild background spacetime.
Furthermore, the factor $\log \left( 1 + 2 \cos \pi s \right)$ is originated from the nature of regular singularity at the event horizon, and taking into account the infinite number of logarithmic spirals is a crucial ingredient to obtain the correct QNMs.

\section{Conclusion and discussion}
\label{sec:conclusion}
We have studied the black hole QNMs using the exact WKB analysis.
To our knowledge, this is the first concrete application of the exact WKB analysis to the black hole perturbations.
The method utilizes the Borel resummation of the standard WKB series to examine the global behavior of solutions to the perturbation equations. The fundamental ingredients in our analysis are: (i) the Stokes phenomena, i.e.~discontinuos changes of asymptotic expansions (with the full solutions stay continuous), (ii) the phase integrals between turning points, and (iii) the residue of the WKB series at the event horizon, which arises when crossing branch cuts on the complex coordinate plane. The effects of (i) are summarized in eq.~\eqref{eq:connection_matrix} and called the Voros connection formulae. The phase integrals (ii) are given by eq.~\eqref{eq:connectionT}, which amounts to the evaluation of closed contour integrals around branch cuts as can be seen in eq.~\eqref{eq:uij} and figure \ref{fig:gamma_ij}. The singular points of the differential equation also play an important role in (iii), where a branch cut inherits the consequence of that singular point which it emerges from. In this context, the exact WKB technique reduces a problem of differential equations to that of complex functional analysis, and those different effects are related to each other thanks to the analyticity of the complex function $Q(x,\eta)$. The discontinuous and/or singular behaviors capture the non-perturbative nature of the black hole background geometry.

In the exact WKB analysis, the analytic continuation of the WKB solutions is performed across the Stokes curves.
In this work, we have developed a methodology to deal with the logarithmic spiral flow of a Stokes curve near a regular singularity, which is essential to properly obtain QNM frequencies. By incorporating the spiral into the analytic continuation, we derive the QNM conditions for the Morse potential and the corresponding QNM frequencies correctly, before applying our method to the QNMs of the Schwarzschild black hole. The correct asymptotic values of higher overtone QNM frequencies are then successfully reproduced.
We note that, in our derivation, the solutions are analytically continued from the horizon to the spatial infinity, where the incoming and outgoing, respectively, behaviors are assumed as the commonly accepted boundary conditions for black hole perturbations.

The main difference between our approach and another widely known method, called the monodromy \cite{Motl:2003cd,Andersson:2003fh}, is that
our path of the analytic continuation is taken from the horizon to infinity along the positive real axis of radial coordinate, while the monodromy, as its name stands, follows a closed contour on the complex plane instead \cite{Andersson:1995zk,Motl:2003cd,Andersson:2003fh}.  
Our contour is the one taken in numerical computations and can be identified in a straightforward manner with no ambiguity because the QNM eigenfunctions we look for are originally defined on the real axis of the radial coordinate.
Also, our method enables the analytic continuation without any approximation at least formally once the Stokes geometry is known, which is of benefit for a wider range of applications over a typical procedure of the monodromy technique. 

In our exact WKB methodology, the QNM conditions themselves involve no approximations and can be applied to lower overtones.
Notably, the Stokes geometries exhibit significant changes at lower overtones.
The physical implications of such nontrivial transformations are intriguing and warrant further investigation.
Of course, calculating the phase integrals poses challenges, as it requires handling the Borel summation of $S_{\text{odd}}$ and performing contour integrals on a genus-1 Riemann surface.
We have constructed a method in appendix \ref{app:phase_int} to compute the leading-order results of the phase integrals that is valid for higher overtones, but further developing methods for the analytical or numerical evaluation of more general phase integrals is an important direction for future research.

There remain some subtleties in taking some limits. In the case of the Schwarzschild QNMs, the spin-$0$ perturbation corresponds to a special point in the parameter space in terms of the Stokes geometry, for which one of the turning points merges and the curvature singularity $r=0$ reduces from a pole of order $2$ to that of order $1$. However, the merging can be interpreted as being either with another turning point or with the double pole at $r=0$. This corresponds to an ambiguity in the order of taking limits, where there appear no criteria to determine which one is the ``correct'' order. A seemingly similar issue is known to occur when considering the higher overtones of perturbations around a charged black hole and taking the limit of vanishing charge \cite{Motl:2002hd,Motl:2003cd}. Taking special limits is equivalent to handling topologically different Stokes geometries and may be related to recovering certain symmetries in some cases. A further investigation of such limits would be an interesting venue to understand the physical system under consideration.

The exact WKB analysis is not limited to general relativity; it is also applicable to gravitational theories beyond general relativity. One of its key advantages is that it does not require the knowledge of special functions to perform analytic continuation. Recently, QNMs in gravitational theories beyond general relativity have been actively studied, e.g., in \cite{Cardoso:2019mqo,McManus:2019ulj,Hirano:2024fgp,Moreira:2023cxy,Moura:2021nuh,Moura:2022gqm}.
We believe that our formulation provides a powerful tool for exploring the asymptotic behavior of QNMs in such gravity theories.

Our formulation also offers a clear framework for calculating reflection and transmission coefficients.
This suggests that the exact WKB analysis may also be applicable to the study of black hole graybody factors \cite{Hawking:1975vcx}\footnote{For the application of the graybody factors to the model of ringdown and QNM excitation, see Refs. \cite{Oshita:2022pkc,Oshita:2023cjz,Okabayashi:2024qbz,Konoplya:2024lir}.} or QNM excitation factors \cite{Leaver:1986gd,Sun:1988tz,Andersson:1995zk,Glampedakis:2001js,Glampedakis:2003dn,Berti:2006wq,Zhang:2013ksa,Oshita:2021iyn} as they are defined by the asymptotic amplitude of the homogeneous solutions.
In summary, the exact WKB analysis is a versatile and powerful tool for addressing black hole perturbation theories across a wide range of contexts.

\acknowledgments
T.M.~and R.N.~are particularly grateful to Takashi Aoki for his intensive lectures on the exact WKB, which gave us confidence for the direction of our project, for sharing his Mathematica code to draw Stokes curves, and for helpful comments on the details of the exact WKB analysis.
R.N.~thanks Masashi Kimura for his useful comments during their discussion.
T.M.~was supported by JSPS KAKENHI Grant Number JP23KJ1543.
R.N.~was in part supported by MEXT KAKENHI Grant Number JP23K25868.
H. O.~was supported by JSPS KAKENHI Grant Numbers JP23H00110 and Yamada Science Foundation. 
N.O.~was supported by Japan Society for the Promotion of Science (JSPS) KAKENHI Grant No.~JP23K13111 and by the Hakubi project at Kyoto University.

\appendix

\section{The residue of $S_{\text{odd}}$ at a double pole (regular singular point)}
\label{app:regular_residue}
There is a convenient theorem about the residue of $S_{\text{odd}}$ evaluated at a regular singular point of the corresponding differential equation of the form \eqref{eq:Schrodinger_type_eq} (i.e.~a double pole of $Q(x,\eta)$), which states as follows \cite{kawai2005algebraic}. 
\begin{prop} \label{prop:regular_residue}
    Suppose $Q(x,\eta)$ has a double pole at $x=b$ and is expanded by large $\eta$ as
    \begin{align}
        Q(x,\eta)=Q_0(x) + \eta^{-1}Q_1(x) + \cdots + \eta^{-N}Q_N(x) \; ,
    \end{align}
    so that each $Q_j$ does not have a pole higher than double at $x=b$.
    Under these assumptions, $S_{\text{odd}}$ has a simple pole at $x = b$, and its residue at $x=b$ is given as
    \begin{align}
        \underset{x=b}{\text{Res}}\,S_\text{odd} = c\eta\sqrt{1+\frac{1}{4c^2\eta^2}} \; ,
        \label{eq:ResidueOfSodd}
    \end{align}
    where $c = \underset{x=b}{\text{Res}}\sqrt{Q} \,$.
\end{prop}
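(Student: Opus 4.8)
The plan is to reduce the pair of Riccati equations \eqref{eq:Riccati} obeyed by $S^+$ and $S^-$ to a single scalar equation for $S_{\text{odd}}$ and then read off the residue by a Laurent expansion about $x=b$. Writing the Riccati equation for each branch, $\left( S^\pm \right)^2 + \left( S^\pm \right)' = \eta^2 Q$, adding the two, and using $S^\pm = \pm S_{\text{odd}} + S_{\text{even}}$ together with $\left( S^+ \right)^2 + \left( S^- \right)^2 = 2 \left( S_{\text{odd}}^2 + S_{\text{even}}^2 \right)$, I would obtain the closed relation
\begin{equation}
S_{\text{odd}}^2 + S_{\text{even}}^2 + \frac{\dd S_{\text{even}}}{\dd x} = \eta^2 Q \; .
\end{equation}
Combined with \eqref{eq:Soddeven_relation}, i.e.~$S_{\text{even}} = -\tfrac{1}{2} \left( \log S_{\text{odd}} \right)'$, this becomes a nonlinear equation in $S_{\text{odd}}$ alone, isolating the object whose residue we seek.

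Next I would pin down the order of the pole of $S_{\text{odd}}$ at $x=b$. Positing $S_{\text{odd}} \sim A \left( x-b \right)^{-m}$ with $A \neq 0$, the relation $S_{\text{even}} = -\tfrac{1}{2} \left( \log S_{\text{odd}} \right)'$ forces $S_{\text{even}}$ to have at most a simple pole, with residue $m/2$ independently of $m$. Hence in the scalar equation the terms $S_{\text{even}}^2$, $\dd S_{\text{even}} / \dd x$, and $\eta^2 Q$ each contribute a pole of order at most two, whereas $S_{\text{odd}}^2$ contributes a pole of order $2m$. Balancing the most singular term then requires $2m = 2$, i.e.~$m=1$, so that $S_{\text{odd}}$ has precisely a simple pole; its existence is guaranteed already by the leading term $\eta \sqrt{Q_0}$, which has a simple pole since $Q_0$ has a double pole. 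Because this balancing is algebraic in the Laurent coefficients, it can be run order by order in $\eta^{-1}$ to confirm that every $S_{2j-1}$ has at most a simple pole at $b$.

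With $m=1$ established, I would write $S_{\text{odd}} = \rho / \left( x-b \right) + O(1)$ with $\rho = \Res{x}{b} S_{\text{odd}}$, which gives $S_{\text{even}} = \tfrac{1}{2} \left( x-b \right)^{-1} + O(1)$. Denoting by $c^2$ the coefficient of $\left( x-b \right)^{-2}$ in $Q$, so that $c = \Res{x}{b} \sqrt{Q}$, and matching coefficients of $\left( x-b \right)^{-2}$ in the scalar equation yields
\begin{equation}
\rho^2 + \frac{1}{4} - \frac{1}{2} = \eta^2 c^2 \; , \qquad \text{i.e.} \qquad \rho^2 = \eta^2 c^2 + \frac{1}{4} \; .
\end{equation}
Taking the square root gives $\rho = \pm\, c\eta \sqrt{1 + 1/\left( 4 c^2 \eta^2 \right)}$, and the sign is fixed to $+$ by matching the leading large-$\eta$ behavior: since $S_{\text{odd}} = \eta \sqrt{Q_0} + O(\eta^{-1})$ and $c \to \Res{x}{b} \sqrt{Q_0}$ as $\eta \to \infty$, one needs $\rho \to +\, \eta \, \Res{x}{b} \sqrt{Q_0}$. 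This reproduces exactly \eqref{eq:ResidueOfSodd}.

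The main obstacle I anticipate is the careful treatment of the formal-series nature of $S_{\text{odd}}$: the residue $\rho$ is itself a formal power series in $\eta^{-1}$, and the closed form on the right-hand side of \eqref{eq:ResidueOfSodd} must be read as its resummation. Making the pole-order balancing and the coefficient matching rigorous requires checking that these manipulations commute with the $\eta$-expansion and, crucially, that the hypothesis that each $Q_j$ has at most a double pole at $b$ propagates through the recursion \eqref{eq:recursion} so that no $S_{2j-1}$ develops a higher-order pole. Once this is secured, the remaining steps are routine Laurent algebra.
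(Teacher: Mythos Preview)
Your argument is correct and reaches the same conclusion, but the route differs from the paper's. The paper works with the two branches $S^\pm$ separately: it first argues from the recursion \eqref{eq:recursion} that each $S_j^\pm$ has at most a simple pole at $b$, then substitutes the Laurent expansion of $S^\pm$ directly into the Riccati equation \eqref{eq:Riccati} and reads off the coefficient of $(x-b)^{-2}$, obtaining a quadratic $F^2 - F - \eta^2 c^2 = 0$ for the residue $F^{(\pm)}$ of $S^\pm$; the residue of $S_{\text{odd}}$ then follows as $\tfrac{1}{2}\bigl(F^{(+)}-F^{(-)}\bigr)$. You instead first pass to the scalar equation $S_{\text{odd}}^2 + S_{\text{even}}^2 + S_{\text{even}}' = \eta^2 Q$ and use \eqref{eq:Soddeven_relation} to eliminate $S_{\text{even}}$, so the Laurent matching yields $\rho$ directly without ever introducing $F^{(\pm)}$. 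Your approach is slightly more economical in that it targets $S_{\text{odd}}$ from the outset, at the cost of invoking the extra identity \eqref{eq:Soddeven_relation}; the paper's approach is more elementary since it uses only the Riccati equation, and its pole-order claim comes cleanly from the recursion rather than from your balancing argument. Both arrive at $\rho^2 = \eta^2 c^2 + \tfrac{1}{4}$, and your sign determination by matching the leading $\eta$ behavior is the same mechanism by which the paper selects the $\pm$ branches of $F^{(\pm)}$.
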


\begin{proof}
    Under the given assumptions, the recursion relation \eqref{eq:S_j} infers that each $S_j^{\pm}$ defined in eq.~\eqref{eq:def_Sjpm} has a simple pole at $x=b$ and no higher-oder poles. Then each $S_j^{\pm}$ can be written as
    \begin{align}
        S_j^{\pm} = \sum_{k\geq -1}f_{j,k}^{(\pm)}(x-b)^k \; ,
    \end{align}
    and then we have
    \begin{align}
        S^{\pm} = \sum_{j = -1}^\infty \eta^{-j} S^\pm_j 
        = \sum_{j,k\geq -1} \eta^{-j} f_{j,k}^{(\pm)}(x-b)^k \; .
    \end{align}
    Substituting this equation into \eqref{eq:Riccati} and looking at the coefficient of the most singular term $(x-b)^{-2}$, we obtain
    \begin{align}
        \bigg(\sum_{j\geq -1}\eta^{-j} f_{j,-1}^{(\pm)}\bigg)^2 - \bigg(\sum_{j\geq -1}\eta^{-j} f_{j,-1}^{(\pm)}\bigg) -\eta^2c^2=0 \; ,
    \end{align}
    whose solution then gives
    \begin{align}
         F^{(\pm)}:=\sum_{j\geq -1}\eta^{-j} f_{j,-1}^{(\pm)}=\frac{1}{2}\pm\frac{1}{2}\sqrt{1+4\eta^2c^2} \; .
    \end{align}
    Notice that $F^{(\pm)}$ are the residues of $S^\pm$ at $x=b$. Recalling from the definition in eq.~\eqref{eq:Spmoddeven} that $S_{\rm odd} = \left( S^+ - S^- \right)/2$, we finally find
    \begin{align}
        \underset{\rho=b}{\text{Res}}\,S_\text{odd}
        =\frac{1}{2}(F^{(+)} - F^{(-)}) = \frac{1}{2}\sqrt{1+4\eta^2c^2} \; ,
    \end{align}
    arriving at the desired equality.
\end{proof}

In practice, once $Q(x,\eta)$ is given, one can compute each $S_{2j+1}$ explicitly and look for the coefficients of the $(x-b)^{-1}$ terms, and $\mathop{\rm Res} S_{\rm odd}$ is their sum. When $Q$ contains a double (or higher) pole, the higher order in $S_j$, the less singular at $x=b$, as can be seen from the recursion relation \eqref{eq:recursion}. Thus typically it suffices to compute the first few terms of $S_{2j+1}$ in order to find the residue.

\section{Calculation of the phase integral}
\label{app:phase_int}

In this appendix, we calculate the integrals of $u_{12}$ and $u_{41}$ in section~\ref{sec:Schwarzschild}. In this appendix, we rescale the radial coordinate $r$ by the Schwarzschild radius $r_S = 2M$ for convenience.
Firstly, we approximate the integrand as
\begin{align}
    I_{ij}
    := \frac{1}{2} \oint_{\gamma_{ij}} S_{\text{odd}}~\dd r
    \simeq \eta \int_{r_i}^{r_j}\sqrt{Q_0(r)} ~\dd r \; .
\end{align}
While the original integral contour is a closed path $\gamma_{ij}$ as shown in figure \ref{fig:gamma_ij}, $Q_0$ is not singular at any turning points, and thus the integral on the right-hand side can be evaluated by a definite integration from the turning point $\gamma_i$ to the other $\gamma_j$.
Provided that the asymptotic behavior of $S_{\rm odd}$ is well approximated by the leading-order WKB term $Q_0$, the above approximation is valid when $r_i$ and $r_j$ are both in the close vicinity of a singular point. In the application to the Schwarzschild QNMs in section \ref{sec:Schwarzschild}, we observe that, the larger $\vert \omega \vert$ is, the closer all the turning points to the origin $r=0$. Hence the above approximation is justified for computations of higher overtones.

We formally expand $\sqrt{Q_0}$ for large $\vert \omega \vert$ as
\begin{align}
    \sqrt{Q_0}&=i\omega \, \frac{r}{r-1}\sqrt{1-J}
    =i\omega \, \frac{r}{r-1}\bigg(1-\frac{1}{2}J-\frac{1}{8}J^2 + \cdots\bigg) 
    =i\omega \, \frac{r}{r-1}\sum_{n=0}^{\infty}a_nJ^n 
    \label{eq:expandQ0}
\end{align}
where
\begin{align}
    J(r)&=\frac{1}{\omega^2}\frac{r-1}{r^4} \left[ l \left( l+1 \right) r-s^2 \right] \,,\\
    a_n&=-\frac{\Gamma(n-1/2)}{2\sqrt{\pi}n!}\,.
\end{align}
Also, in the case of higher overtones of the Schwarzschild QNMs, the turning points can be expanded for large $|\omega|$ as
\begin{align}
    r_j=\theta^{j-1}\sqrt{\frac{s}{\omega}}+\mathcal{O} \big( \vert \omega \vert^{-1} \big) \; ,
    \qquad \theta=\ee^{i\pi/2} \; ,
\end{align}
where the subscript $j=1,2,3,4$ denotes each turning point and 
we take $\arg \omega \simeq -\pi/2$ for $\vert {\rm Im} \, (\omega) \vert \gg \vert {\rm Re} \, (\omega) \vert$.
Note that $r_j = \mathcal{O}\big( \vert \omega \vert^{-1/2} \big)$, and $J(r)$ can be approximated as $J(r) \simeq s^2 / \left( \omega^2 r^4 \right) = \mathcal{O}\big( \vert \omega \vert^0 \big)$ in the vicinity of the turning points.

Let us consider integrating \eqref{eq:expandQ0} term by term, substituting $r$ by the turning points $r_j$ into the integrated formula, and expanding it in terms of $1/\omega$.
The integration of the first term in eq.~\eqref{eq:expandQ0}, $A_0$, then reads
\begin{align}
    A_0 & := \int^{r_j} i\omega \, \frac{r}{r-1}\,\dd r \notag\\
    &=i\omega \left[ r + \log \left( r-1 \right) \right] \Big\vert_{r = r_j } \notag\\
    &=i\omega \left[ i\pi -\frac{1}{2} \, r^2-\frac{1}{3} \, r^3+O\big( r^4 \big) \right]\bigg\vert_{r = r_j} \notag\\
    &= -\pi\omega 
    -i\theta^{2(j-1)} \, \frac{s}{2}
    +\mathcal{O}\big( \vert \omega \vert^{-1/2} \big) \; .
\end{align}
The leading order is thus $\mathcal{O}\big( \vert \omega \vert^0 \big)$.
For $n\geq 1$, it is sufficient to integrate the following function to obtain the $\mathcal{O}\big( \vert \omega \vert^0 \big)$ contributions:
\begin{align}
    i \omega \, \frac{r}{r-1} \, a_n J^n
    \simeq - i a_n \, \omega^{1-2n} s^{2n} r^{1-4n} \; .
\end{align}
The integration of this quantity is given by
\begin{align}
    A_n := \int^{r_j} i \omega \, \frac{r}{r-1} \, a_n J^n \, \dd r
    \simeq 
    i s \, \theta^{2 \left( j-1 \right)} \, \frac{a_n}{4n-2}
    + \mathcal{O} \big( \vert \omega \vert^{-1/2} \big)\,.
\end{align}
Since the infinite series of $a_n/(4n-2)$ over $n=0,1,2,\dots$ is convergent and gives $-\pi/4$, 
we obtain, summing the leading-order terms in $A_n$ over $n$ $(n=0,1,2,\ldots)$, 
\begin{align}
    I_j
    &:=\int^{r_j}\sqrt{Q_0}\, \dd r \notag\\
    &=\sum_{n=0}^{\infty} A_n +\mathcal{O} \big( \vert \omega \vert^{-1/2} \big) \notag\\
    &=-\pi\omega -\frac{i\pi s}{4} \, \theta^{2(j-1)}
    +\mathcal{O} \big( \vert \omega \vert^{-1/2} \big)\,.
\end{align}
Finally, using this result, $I_{12}$ and $I_{41}$ are evaluated as
\begin{align}
    I_{12}&=I_2-I_1=
    \frac{i\pi s}{2}
    +\mathcal{O}\big( \vert \omega \vert^{-1/2} \big) \,, \\
    I_{41}&=I_1-I_4=
    -\frac{i\pi s}{2}
    +\mathcal{O}\big( \vert \omega \vert^{-1/2} \big) \,.
\end{align}
These approximate values are used to obtain the final expressions of the Schwarzschild QNMs in the main text.

\bibliographystyle{JHEP}
\bibliography{manuscript}
 
\end{document}